\def\BState{\State\hskip-\ALG@thistlm}
\newenvironment{proof}[1][Proof]{\paragraph*{Proof}}{\ \rule{0.5em}{0.5em}}
\newtheorem{Proposition}{Proposition}
\newtheorem{lemma}{Lemma}
\newtheorem{myremark}{Remark}
\newcommand{\snr}{\mathrm{SNR}}
\newcommand{\sinr}{\mathrm{SINR}}
\newcommand{\tr}{\mathrm{tr}}
\newcommand{\Var}{\operatorname{Var}}
\newcommand{\Exp}{\mathbb{E}}
\newcommand{\Diag}{\operatorname{diag}}
\newcommand{\sinc}{\mathrm{sinc}}
\newcommand{\removelatexerror}{\let\@latex@error\@gobble}
\begin{document}
\title{
Massive MIMO Channels with Inter-User Angle Correlation: Open-Access Dataset, Analysis and Measurement-Based Validation
}
\author{
\IEEEauthorblockN{Xu Du and Ashutosh Sabharwal}
\thanks{
The authors are with the Department of Electrical and Computer Engineering, Rice University, Houston, TX, 77005 (e-mails: xdurice@gmail.com, ashu@rice.edu).
This work was partially supported by NSF grants 1827940, 2016727 and 2120363, and a grant from Qualcomm, Inc.
Xu Du presented part of this paper as a section in his thesis defense at Rice University in~\cite{du2019phd_thesis}.
}
}

\bibliographystyle{libs/IEEEbib}
\maketitle
\IEEEpeerreviewmaketitle
\begin{abstract}
In practical propagation environments, different massive MIMO users can have correlated angles in spatial paths.
In this paper, we study the effect of angle correlation on inter-user channel correlation via a combination of measurement and analysis.
We show three key results.
First, we collect a massive MIMO channel dataset for examining the inter-user channel correlation in a real-world propagation environment; the dataset is now open-access.
We observed channel correlation higher than $0.48$ for {\sl all} close-by users.
Additionally, over $30$\% of far-away users, even when they are tens of wavelengths apart, have inter-user channel correlation that is at least twice higher than the correlation in the i.i.d.\ Rayleigh fading channel.
Second, we compute the inter-user channel correlation in closed-form as a function of inter-user angle correlation, the number of base-station antennas, and base-station inter-antenna spacing.
Our analysis shows that inter-user angle correlation increases the inter-user channel correlation.
Inter-user channel correlation reduces with a larger base-station array aperture, i.e., more antennas and larger inter-antenna spacing.
Third, we explain the measurements with numerical experiments to show that inter-user angle correlation can result in significant inter-user channel correlation in practical massive MIMO channels.
\end{abstract}

\section{Introduction}~\label{sec:intro}
Massive multiple-input multiple-output (MIMO) is one of the key $5$G components due to its potential benefits, including increased spectral efficiency~\cite{ngo2013energy, yang2013performance}, wider coverage region~\cite{marzetta2016fundamentals,bjornson2017massive,larsson2014massive}, decreased channel variation~\cite{marzetta2016fundamentals,bjornson2017massive}, reduced interference~\cite{marzetta2016fundamentals,bjornson2017massive}, improved reliability~\cite{du2019balance}, and simplified physical layer precoding~\cite{ngo2013energy}.
A commonly adopted model of massive MIMO, often labeled as ``favorable propagation''~\cite{marzetta2010noncooperative, 6375940, ngo2013energy, favorable_prop}, assumes that  the user channels are \mbox{(near-)orthogonal} to each other when the base-station is equipped with more than tens of antennas. This model follows if we assume that the channel between each transmit-receive antenna pair is generated by an independent and identically distributed (i.i.d.) process.

The inter-user channel orthogonality is a critical assumption behind many desirable properties of massive MIMO.
And inter-user channel correlation can reduce throughput~\cite{6375940,matthaiou2018does, marzetta2016fundamentals, bjornson2017massive}, increase interference~\cite{pratschner2019does, marzetta2016fundamentals, bjornson2017massive}, lower reliability~\cite{6375940, marzetta2016fundamentals, du2019balance}, and raise precoding complexity~\cite{ngo2013energy, yang2013performance, marzetta2016fundamentals, bjornson2017massive}.
For the i.i.d.\ Rayleigh fading channel, it is well-known that the user channels become orthogonal to each other as the number of base-station antennas increases. The inter-user channel correlation reduces at rate $1/\sqrt{M}$, where $M$ is the number of base-station antennas.
For line-of-sight (LOS) channels with various array configurations, past work~\cite{favorable_prop,wu2017favorable,matthaiou2018does} demonstrates that the inter-user channel correlation reduces at rate $O\left(1/\sqrt{M}\right)$ if the spatial paths have statistically independent angles.
Therefore, if the user channels are statistically independent or consist of paths with statistically independent angles, user channels should be near-orthogonal for practical massive MIMO systems with more than tens of base-station antennas.
As we move away from the independently distributed user channel assumption, relatively little is known about the massive MIMO inter-user channel correlation.

\begin{figure}[htbp]
  \centering
  \includegraphics[width=0.48\textwidth]{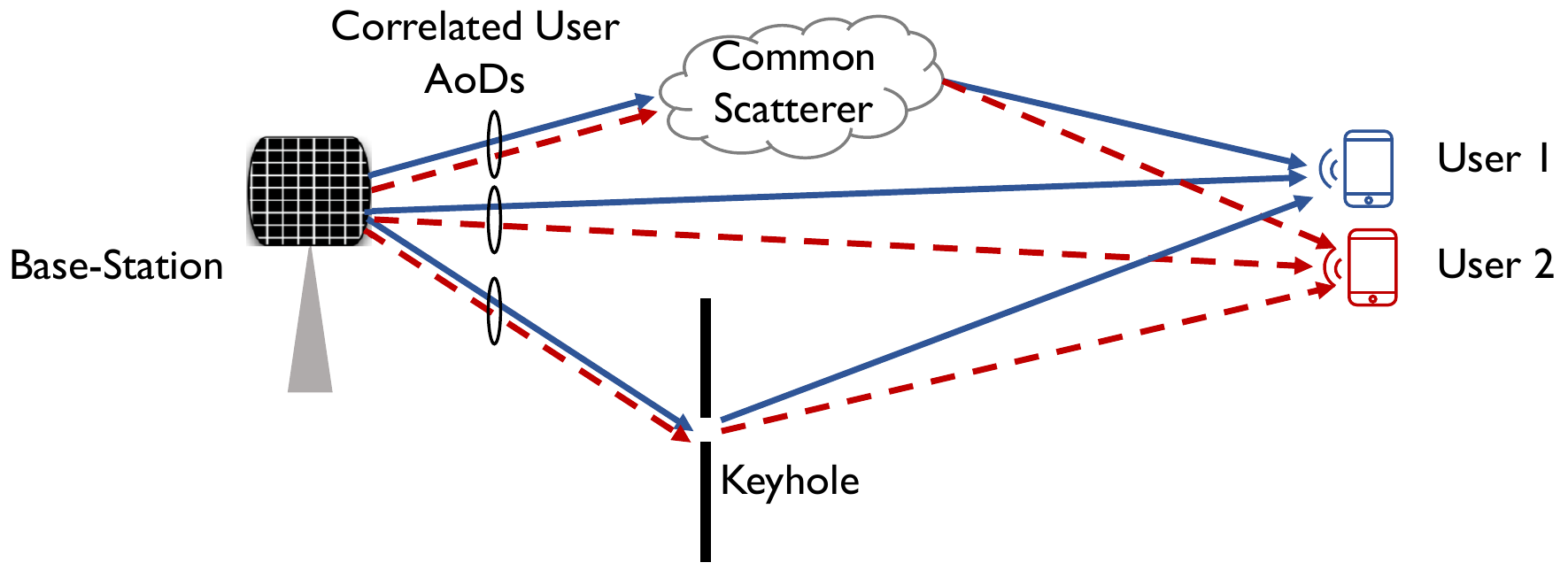}
  \caption{
  The figure illustrates a two-user massive MIMO downlink channel consisting of three path pairs with inter-user angle correlation.
   Two users are in geo-location proximity to each other.
   The solid blue (dashed red) lines represent spatial paths from the base-station towards User $1$ (User $2$).
   In this figure, the inter-user angles are close for three reasons: common scatterers, geo-location proximity, and keyhole effect.
   Section~\ref{sec:RayAnalysis} will prove that such inter-user angles-of-departure correlation can lead to significant inter-user channel correlation in practical massive MIMO systems.
   Note that the angles-of-arrival at the users can be distinct. And this paper does not assume correlation among the angles-of-arrival at the users.
  }~\label{fig:cha_mdl}
\end{figure}

However, in practical propagation environments, paths of different users could have correlated angles.
Fig.~\ref{fig:cha_mdl} depicts an example of a two-user channel consists of three paths to each user.
In Fig.~\ref{fig:cha_mdl}, the spatial path pairs have correlated angles due to three reasons.
The first reason for inter-user angle correlation is sharing common scatterers, where two-user paths share almost the same Angle-of-Departure (AoD) from the base-station.
The second pair is LOS paths, which have correlated angles because the users are close by to each other.
Finally, the third pair paths have correlated angles due to the spatial keyhole~\cite[Chapter 7.3]{marzetta2016fundamentals}, which is pictured by a slot through a wall. The keyholes can lead to close-by angles for even far-away users.
In summary, the inter-user angle correlation can exist in practical propagation environments.


The above examples motivate
the question \emph{how does inter-user angle correlation impact the inter-user channel correlation?}
Towards that end, our main contributions are as follows:
   \begin{enumerate}
     \item We collect a diverse massive MIMO channel dataset with a $64$-antenna planar array.
     Our evaluation includes line-of-sight (LOS) and non-line-of-sight (NLOS) propagation environments with more than $11,500$ unique user channel vectors from $225$ different locations.
     The measured channel dataset is publicly available  at~\cite{open_data_link}.
     With a $64$-antenna base-station array, the channel correlation between {\sl all} close-by users is higher than $0.48$.
     In our dataset, close-by users' channel correlation reduces very slowly when the number of base-station antennas is larger than $36$.
     For far-away users, $30.56$\% user pairs have channel correlation that is at least twice higher than the correlation in the i.i.d.\ Rayleigh fading channel.
     The observed user correlation is near-constant across the measured $20$ MHz Orthogonal Frequency-Division Multiplexing (OFDM) band.

     \item      We compute the inter-user channel correlation for massive MIMO with inter-user angle correlation by using a spatial channel model.
     We characterize the channel correlation in closed-form to explain the measurement findings.
 In the special case, when at least one inter-user path pair always share an exact angle, the inter-user channel correlation converges to a positive constant that does not reduce with the number of base-station antennas $M$.
     For space-constrained systems where the base-station inter-antenna spacing decreases inversely proportionally with $M$, the inter-user channel correlation also converges to a positive constant as $M$ increases to infinity.
     In all other cases with fixed base-station inter-antenna spacing, the inter-user channel correlation will converge to zero
     as $M\to \infty$.
     The inter-user channel correlation increases with the inter-user angle correlation, and reduces with the base-station inter-antenna spacing.
     Therefore, increasing the base-station array aperture with more antennas or larger inter-antenna spacing can reduce the inter-user angle correlation induced inter-user channel correlation.

     \item
      We use numerical experiments to explain the measurement and verify the inter-user correlation analysis in LOS and NLOS environments.
       The simulation results demonstrate the analysis accuracy in both finite-array and large-array regimes.
     Numerical experiments confirm that inter-user angle correlation increases inter-user channel correlation.
The simulations further verify our theoretical prediction that increasing the base-station array aperture can offset the inter-user channel correlation induced by inter-user angle correlation.
    \end{enumerate}

 \noindent
{\sl Related Work}:
One of the early measurements~\cite{Bell_lab_LOS_measurement} demonstrated that the {\sl average} inter-user channel correlation of far-away users reduces as the number of base-station antennas increases.
By measuring tens of close-by user channels with uniform linear array~\cite{Gao-mea,chen2017scaling,pratschner2020measured}, uniform cylindrical array~\cite{Gao-mea,Lund_MU2015}, uniform rectangular planar array\cite{demark_mea,7869082}, recent channel measurements find that the channels of close-by users might have high correlation.

Therefore, the inter-user channel correlation had been observed to behave differently for users with different physical distances.
During our measurements, we observed high channel correlation not only for {\sl all} close-by users, but also for $30.56$\% far-away user pairs.
Using spatial signal processing, we find that the inter-user angle proximity is a likely cause of high inter-user channel correlation for both close-by and far-away users.

Past  work~\cite{marzetta2016fundamentals, favorable_prop, wu2017favorable, yang2017massive}  analyzed  the  inter-user channel  correlation  in  the  LOS  environment  without  inter-user  angle  correlation  for  a  base-station  with  the  uniform linear array.
Recent work~\cite{masouros2015space, pratschner2019does} extended the past work to consider the effect of limited base-station linear array aperture.
Section~\ref{sec:RayAnalysis}  presents  a  generalized  inter-user channel  correlation analysis  for  a  multi-carrier  OFDM  system  in  a multi-path environment with inter-user channel correlation, and the uniform  planar  array.
Compared  to~\cite{marzetta2016fundamentals, favorable_prop, wu2017favorable, masouros2015space, yang2017massive}, Section~\ref{sec:RayAnalysis} captures the effect of the inter-user angle correlation, multi-path, cross subcarrier channel correlation variation, and the two-dimension array.
The generalization enables the explanation of the measurement findings.
And the existing analysis can be viewed as specialized versions of Section~\ref{sec:RayAnalysis}.

There is also prior research~\cite{kolmonen2010measurement, liu2012cost, QuaDRiGa, NYUSIM, pratschner2020measured, spatially_consistence} on modeling the time-variant channels along the trajectory of a single mobile user by considering the spatial scattering clusters.
In each scattering cluster, the simulator~\cite{kolmonen2010measurement, liu2012cost, QuaDRiGa, NYUSIM, pratschner2020measured, spatially_consistence} randomly generates multi-path components according to specific large-scale parameters.
This research stream is usually referred to as the “spatially consistent channel” and is justified by fitting to channel measurements.
For example,  QuaDRiGa~\cite{QuaDRiGa}  adopted the simulation-based approach to model the time-variant channel of a  MIMO downlink mobile user.
And NYUSIM~\cite{NYUSIM} modeled the mm-wave bands that are up to $140$ GHz.  Reference  inside\cite{pratschner2020measured, spatially_consistence, QuaDRiGa}  provides  more  details  about  this  research  stream.
As this paper focuses on the inter-user channel correlation measurement and analysis, Section~\ref{sec:RayAnalysis} leverages a spatial channel model to explain the over-the-air findings.
We consider it a future direction to improve the analysis in Section IV by adopting a different channel model, like the spatially consistent channel models.

Finally, there exists massive MIMO work that suppresses the intra-cell and inter-cell interference in the spatial domain by optimized beamforming design~\cite{bjornson2017massive, marzetta2016fundamentals,adhikary2013joint, ngo2013energy, yang2013performance,6375940}, improved channel estimation quality~\cite{bjornson2017massive, marzetta2016fundamentals,6375940}, and interference-aware user grouping~\cite{bjornson2017massive, marzetta2016fundamentals,adhikary2013joint,7588195,6375940}.
This paper focuses on the inter-user channel correlation measurement and analysis.
And the impact on massive MIMO system design is a crucial direction beyond the scope of this paper.

The paper is organized as follows. Section~\ref{sec:SysMdl} describes the system model.
We next present a channel measurement campaign and characterize the inter-user channel correlation over-the-air in Section~\ref{sec:measurement}.
Section~\ref{sec:RayAnalysis} explains the measurement findings by computing the inter-user channel correlation for systems with inter-user angle correlation.
Section~\ref{sec:Numerical} uses numerical experiments to verify the analytical results in Section~\ref{sec:RayAnalysis}.
Finally, Section~\ref{sec:Conclude} concludes the main findings of this paper.

{\sl Notations}:
Boldface represents vectors and matrices.
We use $|\cdot|$ to denote the magnitude of a complex number.
And the $l_2$-norm of a complex vector is $\| \cdot \|$.
The space of the real number is $\mathbb{R}$ and $\mathbb{C}$ is the complex space.
The complex Gaussian distribution is denoted by $\mathcal{CN}$.
The conjugate transpose and the trace of a matrix $\mathbf{H}$ are denoted as $\mathbf{H}^{H}$ and $\tr \mathbf{H}$, respectively.
The diagonal matrix is $\Diag\left(\cdot\right)$, and an $M$-element all-ones vector is $\mathbf{1}_{M}$.
For a random variable $X$, the expected value and variance are denoted by $\Exp\left[X\right]$ and $\Var\left[X\right]$, respectively.

\section{System Model}\label{sec:SysMdl}
\subsection{System Setup}\label{subsec:sys_setup}
We consider a single-cell massive MIMO OFDM downlink system with an $M$-antenna base-station and $K$ single-antenna mobile users.
The channel consists of $N$ OFDM subcarriers.
We assume that the base-station is time synchronized with the $K$ users.
For subcarrier $n$, the downlink channel is $\mathbf{G} \mathbf{H}_{n}$, where $\mathbf{G}=\Diag\left(\sqrt{\gamma_1}, \sqrt{\gamma_2}, \dots, \sqrt{\gamma_K}\right)$ is the large-scale channel gains and $\mathbf{H}_{n}\in\mathbb{C}^{K \times M}$ denotes the small-scale fading for Subcarrier $n$.
The downlink channel of User $k$, $\mathbf{h}_{k, n}\in\mathbb{C}^{M}$, is the $k$-th row of the small fading channel matrix, whose norm satisfies $\left\|\mathbf{h}_{k, n}\right\| =\sqrt{M}$.
The base-station adopts downlink beamforming to serve all $K$ users at the same time. Let the downlink beamforming matrix be $\mathbf{V}_{n}\in\mathbb{C}^{M \times K}$. The received signals by users then equal
\begin{equation}
\mathbf{y}_{n} = \mathbf{G} \mathbf{H}_{n}\mathbf{V}_{n}\mathbf{s}_{n} + \mathbf{w}_{n}, \quad n=1,\dots,N.~\label{equ:sig_mdl}
\end{equation}
The $\mathbf{w}_{n}\in \mathbb{C}^{K}$ is the additive noise, whose elements follow independent standard complex Gaussian distribution $\mathcal{CN}\left(0, 1\right)$.
Each symbol in the signal sequences $\mathbf{s}$ follows $\mathcal{CN}\left(0, 1\right)$.
And the total transmission power is $P$, which equals $\Exp\left[\tr \mathbf{V}_{n}\mathbf{V}_{n}^{H}\right]$.

Prior to the data transmission, the base-station estimates the downlink channel $\hat{\mathbf{H}}_{n}\in\mathbb{C}^{K \times M}$ for beamforming.
A popular massive MIMO channel estimation model is~\cite{lottici2002channel,marzetta2016fundamentals}
\begin{equation}
\mathbf{H}_{n} = \hat{\mathbf{H}}_{n} + \tilde{\mathbf{H}}_{n}, ~\label{equ:h_hat}
\end{equation}
where $\tilde{\mathbf{H}}_{n} \in \mathbb{C}^{K \times M}$ is the channel estimation error matrix.
Denote $\hat{\mathbf{h}}_{k,n}$, $\tilde{\mathbf{h}}_{k,n}$, and $\mathbf{v}_{k,n}$ as the estimated channel, estimation error, and beamforming vector of User $k$ in Subcarrier $n$, respectively.
The received signal of User $k$ equals~\cite{marzetta2016fundamentals,bjornson2017massive}
\begin{align}
y_{k,n} = & \sqrt{\gamma_{k}} \hat{\mathbf{h}}_{k,n}^{H} \mathbf{v}_{k,n} s_{k,n}
 + \sum_{j\neq k}\sqrt{\gamma_{k}} \hat{\mathbf{h}}_{k,n}^{H} \mathbf{v}_{j,n} s_{j,n} \notag
 \\
 & + w_{k,n}
 - \sqrt{\gamma_{k}} \tilde{\mathbf{h}}_{k,n}^{H} \mathbf{V}_{n} \mathbf{s}_{n},\label{equ:y_k}
\end{align}
where $w_{k,n}$ is the additive noise, and $s_{k,n}$ is the transmitted symbol.
The first term denotes the desired signal. And the rest three terms in~\eqref{equ:y_k} represent interference from the beamforming algorithm, noise, and interference from the imperfect channel knowledge, respectively.

The interference terms in~\eqref{equ:y_k} imply that inter-user channel correlation reduces effective $\sinr$ due to increased interference.
For example, when the base-station adopts the conjugate beamforming~\cite{marzetta2016fundamentals,bjornson2017massive}
based upon perfect channel knowledge, $ \mathbf{v}_{k,n} = \mathbf{h}_{k,n}^{H}$.
The interference power caused by inter-user channel correlation then equals
$
\gamma_{k} \sum_{j\neq k} \left|\mathbf{h}^{H}_{j,n} \mathbf{h}_{k,n}\right|^2,
$
which is determined by the inter-user channel dot products.
Therefore, the inter-user channel correlation reduces the effective $\sinr$, hence the achievable transmission rate.
This paper focuses on characterizing the inter-user channel correlation with measured massive MIMO channels, and providing theoretical explanations.
Textbooks~\cite{marzetta2016fundamentals,bjornson2017massive} and the included references provide an comprehensive review of massive MIMO system performance with more complex settings.

\subsection{User Correlation Definition}
Inspired by the past massive MIMO literature~\cite{marzetta2010noncooperative, 6375940, larsson2014massive, ngo2013energy, yang2013performance},
this paper captures the inter-user channel correlation by the cosine similarity.
This paper only calculates the inter-user channel correlation for channels on the same subcarrier.
To simplify the notation, we will omit the subcarrier index $n$ in $\mathbf{h}_{k,n}$ when the subcarrier index is clear from the context.
The correlation of User $k$ and User $k^{'}$ is defined as
\begin{equation}
 \sqrt{\Var \left[\frac{\mathbf{h}_{k}^{H}\mathbf{h}_{k^{'}}}{\left|\mathbf{h}_{k}\right| \left|\mathbf{h}_{k^{'}}\right|} \right] }= \frac{1}{M} \sqrt{\Var \left[\mathbf{h}_{k}^{H}\mathbf{h}_{k^{'}}\right]}, \label{equ:alpha_def}
\end{equation}
where the variance operator is over the joint distribution of user channels. The desired pairwise channel orthogonality means
\begin{equation}
\frac{1}{M} \sqrt{\Var \left[\mathbf{h}_{k}^{H}\mathbf{h}_{k^{'}}\right]} = 0, \ \text{for all} \ k\neq k^{'} \ \text{and} \ k,k^{'}=1,2,\dots, K. ~\label{equ:favorable_condition}
\end{equation}
Note that this paper does not choose the mean operator $\Exp\left[\mathbf{h}_{k}^{H}\mathbf{h}_{k^{'}}\right]$ since it always equals to zero when the phases of spatial paths follow independent uniform distribution over $\left[0, 2\pi\right]$.
Please see~\cite[7.1]{marzetta2016fundamentals} and~\eqref{equ:main_finite_mean} in Appendix~\ref{appendix:main_finite} for examples in single-path and multi-path environments.

To understand the deviation of practical inter-user channel correlation from~\eqref{equ:favorable_condition}, this paper presents a combination of measurement campaign and theoretical analysis.
Section~\ref{sec:measurement} uses over-the-air measurements to characterize inter-user channel correlation with a base-station that has $64$ antennas.
Section~\ref{sec:RayAnalysis} computes the inter-user channel correlation with inter-user angle correlation in closed-form.
The theoretical results are numerically verified in Section~\ref{sec:Numerical}.

Following the past  MIMO inter-user channel correlation measurements~\cite{Gao-mea,chen2017scaling,pratschner2020measured} and analysis~\cite{favorable_prop, wu2017favorable,masouros2015space, pratschner2019does}, this paper uses the i.i.d. Rayleigh fading channel as the baseline benchmark.
This baseline channel model can be justified with the central limit theorem via considering rich scattering environments with an infinite number of paths from independent scatters~\cite{marzetta2016fundamentals}.

\section{Measured Inter-User Channel Correlations}\label{sec:measurement}
In this section, we examine inter-user channel correlation based on Over-The-Air (OTA) measured channels.
The collected channel data set is also open-accessed for download at~\cite{open_data_link}.
We remark that~\cite{open_data_link} also contains measured channel dataset for full-duplex massive MIMO~\cite{SoftNull} and FDD massive MIMO~\cite{zhang2018directional}.

\subsection{Measurement Setup}~\label{subsec:measurement_detail}
\begin{figure*}[htbp]
\centering
  \begin{minipage}{.222\linewidth}
  \centering
  \subfloat[Argos Platform]{\label{fig:array}
                            \includegraphics[height=1\linewidth]{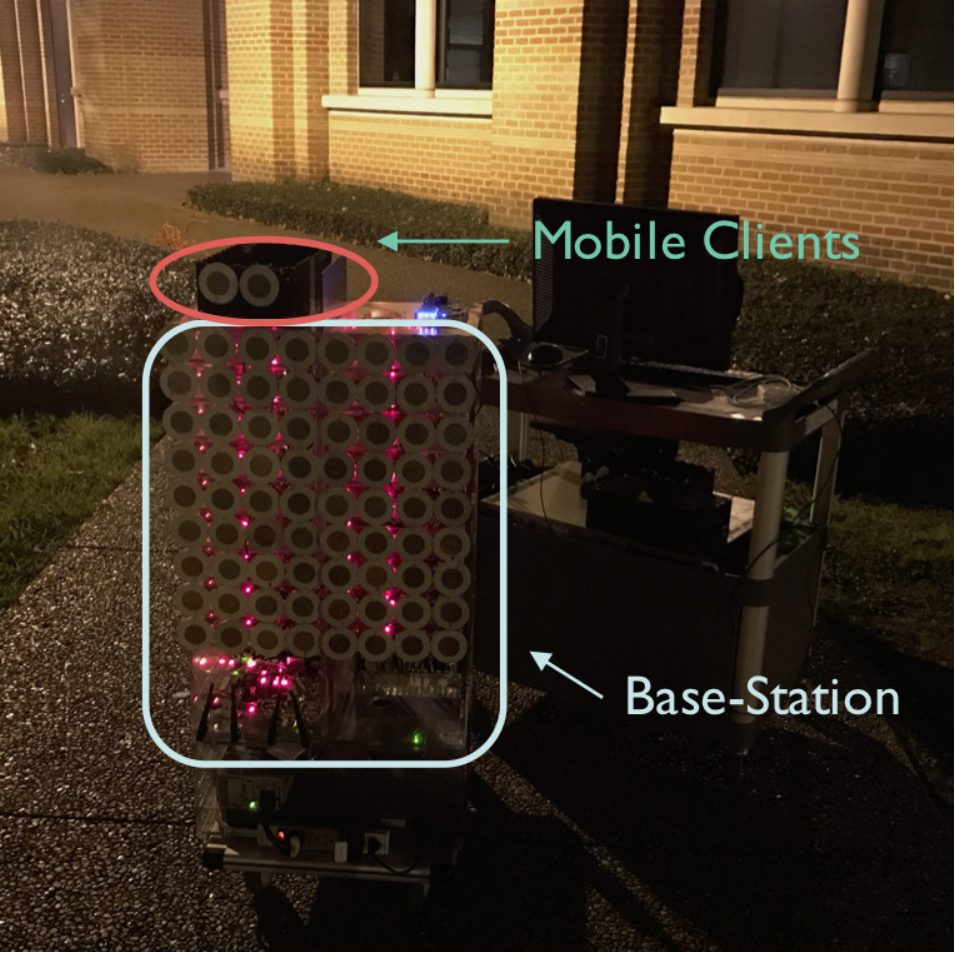}}
  \end{minipage}
  \begin{minipage}{.444\linewidth}
  \centering
  \subfloat[Measurement Locations]{\label{fig:exp-locs}
                            \includegraphics[height=0.5\linewidth]{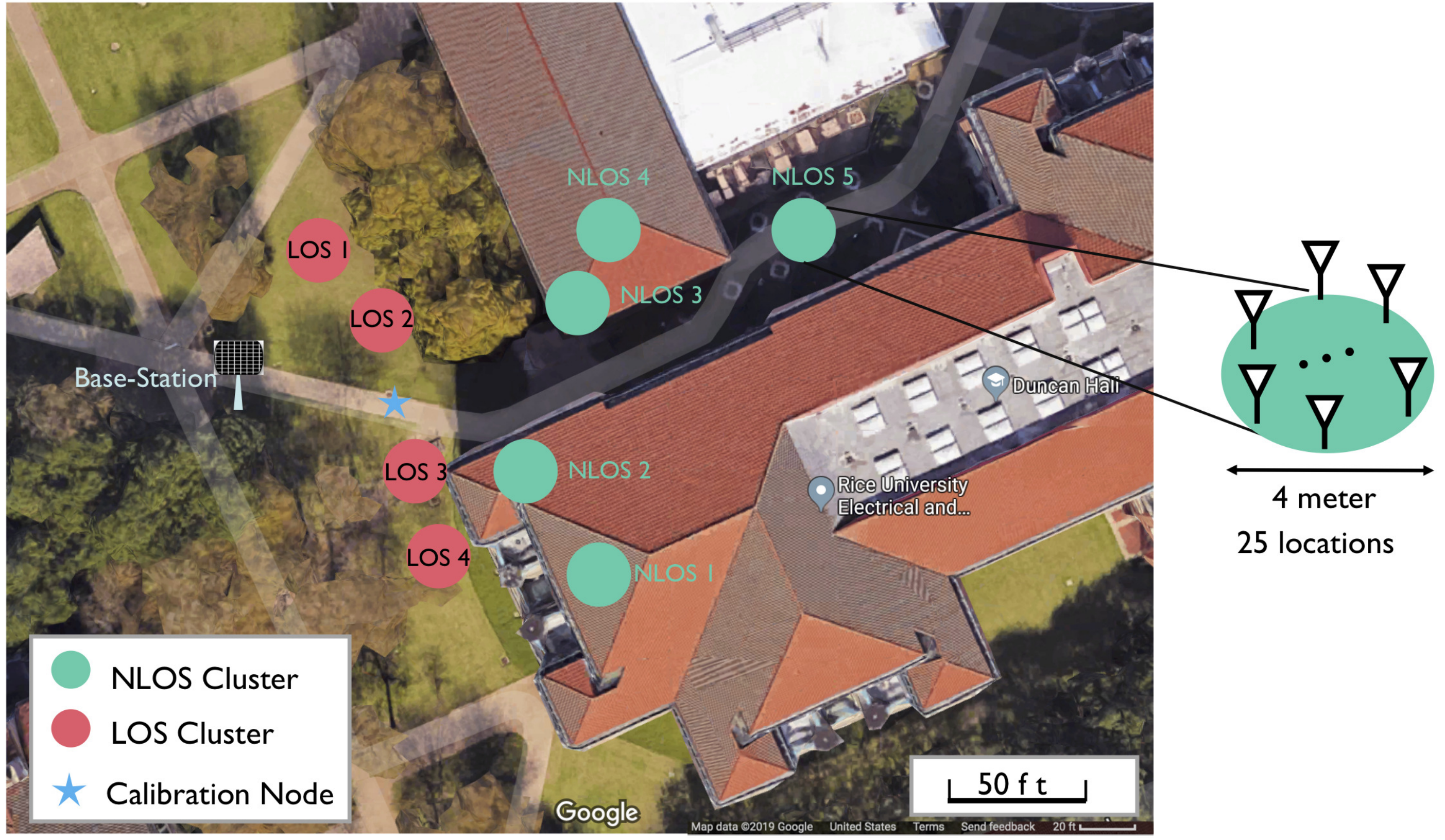}}
  \end{minipage}

  \begin{minipage}{.222\linewidth}
      \centering
  \subfloat[Measurement Setup]{\label{fig:exp_setup}
                            \includegraphics[height=0.8\linewidth]{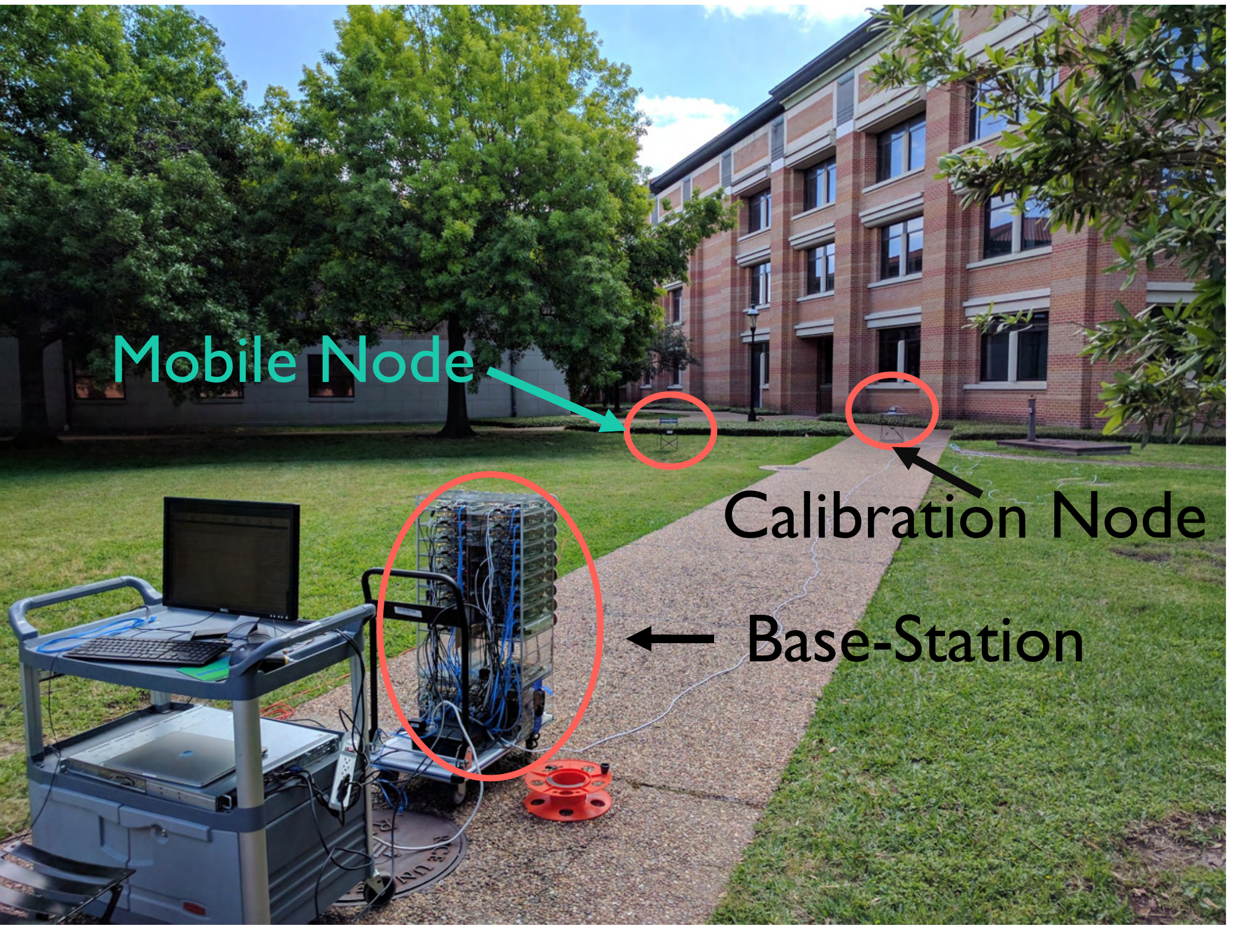}}
  \end{minipage}
  \begin{minipage}{.444\linewidth}
  \centering
  \subfloat[Calibration Node Setup]{\label{fig:calib}
                            \includegraphics[height=0.4\linewidth]{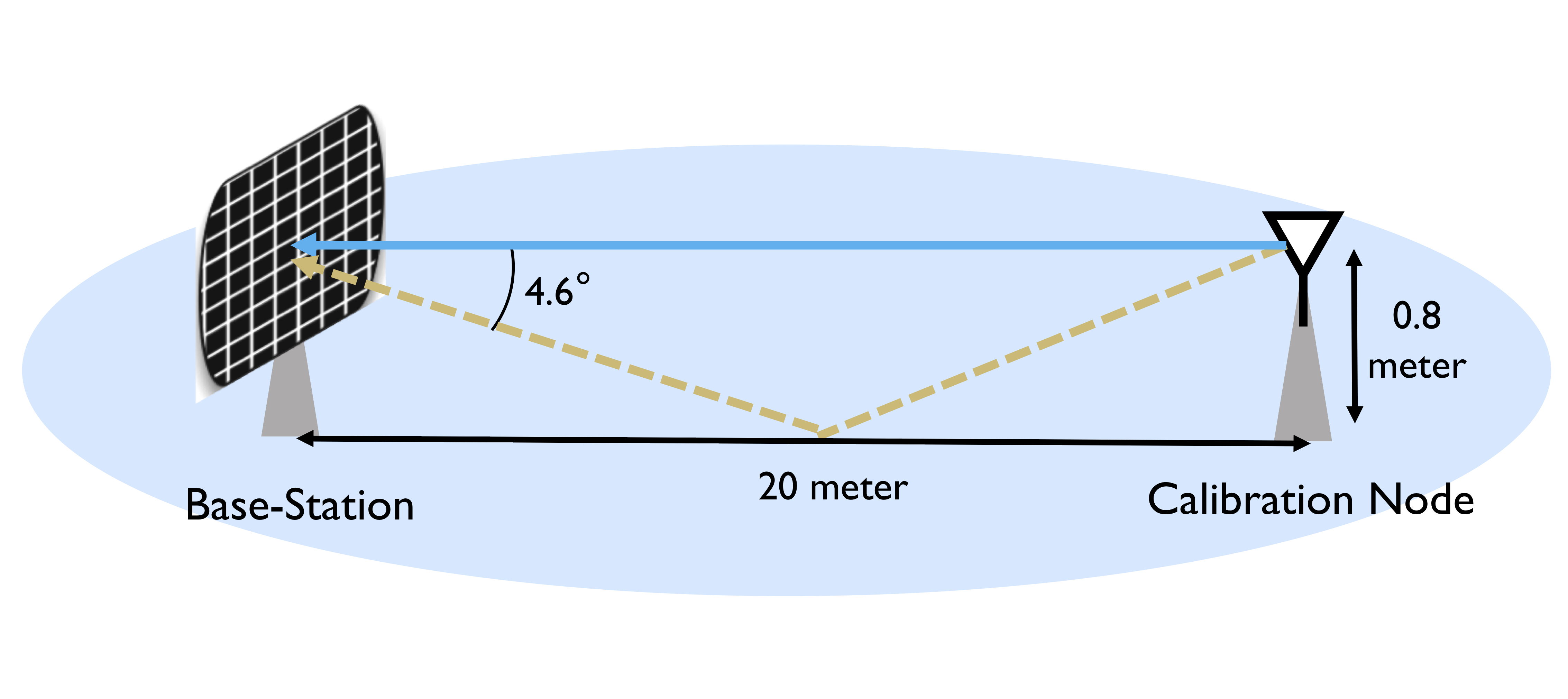}}
  \end{minipage}
\caption{Argos~\cite{shepard2012argos} Massive MIMO base-station and the over-the-air measurements setup.
The background map of Fig.~\ref{fig:exp-locs} is from Google Maps~\cite{google_map}.
}
\end{figure*}

\subsubsection{Measurement Platform}
  We used the Argos platform to measure the channels from mobile nodes to a $64$-antenna massive MIMO base-station.
  Both the mobile nodes and the base-station are developed based upon the WARP software defined radio modules~\cite{warp}.
  Each of the used WARP module is equipped with four active radio chains, a clock board, and a Field-Programmable Gate Array (FPGA).
  The FPGA and radio chains are programmed via extending the WARPLab~\cite{warp}.
  All WARP modules are controlled by a Argos central controller via a customized Argos Hub.
  The Argos Hub is responsible for data switching, WARP module synchronization, and clock distribution.
  A dedicated design and review of the Argos platform and WARP platform are presented in~\cite{shepard2012argos} and~\cite{warp}, respectively.
  And Section 4 of~\cite{shepard2012argos} provides the implementation details, which includes the function blocks at the mobile users, base-station, and Argos controller.

  During the measurement, the Argos platform measures the uplink channel by letting the mobile nodes send 802.11 frames that contain the channel estimation pilots in the Physical Layer Convergence Protocol (PLCP) preamble, whose training frame structure can be found in~\cite[Fig. 18-4]{802-11-2012}.
  For Argos, the transmitters transmit pilots that are located in the Long Training Sequence (LTS) in the 802.11 PLCP preamble.
  The base-station then measures and records the raw IQ samples.
  Finally, the channel is estimated by correlating the measured IQ samples using the pilot sequence with Fast Fourier Transform~\cite{warp}.

 With the Argos platform, we measure the OTA channels on the campus of Rice University near the Anne and Charles Duncan Hall (N $29.720138^{o}$, W $95.39876^{o}$).
The users are in a near-stable environment.
The measurements are on the $2.4$ GHz Wi-Fi ISM band with a bandwidth of $20$ MHz.
Both the base-station and users are equipped with patch antennas with $3$-dB beam-width of around $120$-degree. The $8 \times 8$ $64$-antenna array and a mobile user are shown in Figure~\ref{fig:array}.

\subsubsection{Measured Locations}
The channels between the base-station antennas and users from $4$ line-of-sight (LOS) clusters and $5$ non-line-of-sight (NLOS) clusters are measured.
Each cluster is of circle-shape with a diameter of around $4$ meters.
Figure~\ref{fig:exp-locs} presents the locations of the massive MIMO base-station and user clusters.
In each cluster, we measure the user channels of $25$ {\sl different} uniformly selected locations.
Therefore, the minimum distance between two users from the same cluster is about $0.7$ meters.
In total, we measure the mobile user channels in $225$ locations.
For each location, we measure the uplink channel between a mobile user and the base-station of the $52$ data-carrying subcarriers over the $20$ MHz band in more than $140$ frames.
In our evaluations, we will utilize all subcarriers from the first frame of each location.
In total, we characterize the inter-user channel correlation over-the-air based on more than $11000$ unique channel vectors of length $64$.

\subsubsection{Array Calibration for Antenna Phase Mismatch}
For the Argos~\cite{shepard2012argos} massive MIMO base-station, each antenna is connected to an individual radio-frequency chain.
During channel measurements, each base-station antenna estimates its channel coefficient by correlates the received signal with an uplink pilot sequence.
As the phase-locked loop component of each radio-frequency chain introduces an independent random phase to each antenna, the collected channel contains mismatched phases across different base-station antennas and different measured locations.
Such phase mismatch raises two critical issues in the examination of inter-user channel correlation.
Firstly, the phase mismatch across base-station antennas makes the (uplink) angle-of-arrival estimation infeasible.
The reason is that the angle-of-arrival estimation requires a phase synchronized array~\cite{MUSIC-review}.
Secondly, phase mismatches across different measurements obstruct the inter-user channel correlation computation of channels measured at different locations.
The challenges come from that the correlation computation~\eqref{equ:alpha_def} requires that each base-station antenna have a fixed (relative) phase offset during the measurements of different users.
Thus, array phase calibration is essential for both the angle estimation and the inter-user channel correlation computation.

To calibrate the phase mismatch, we introduce a calibration node in all measurements.
The calibration node is placed perpendicular to the array plane at the direction of elevation angle $0^{\circ}$ and azimuth $0^{\circ}$.
The distance between the calibration node and the base-station array is about $20$ meters.
The main path is at $\left(0^{\circ},0^{\circ}\right)$.
Fig.~\ref{fig:calib} shows that there might exist an additional ground reflection path, which is about $4.6^{\circ}$ elevation away from the main path.
Since the ground reflection path should be of much lower gain and the angle difference is small, the channel between the calibration node and the array can be modeled as $\mathbf{1}_{M}$.
Therefore, by measuring the uplink channel from the calibration node to the base-station, we obtain the relative phase mismatch of each base-station antenna.
To determine the calibrated channel of each location, we measure the channels of the mobile and the calibration nodes in {\sl all} channel measurements.
The calibrated channel of each location is obtained by the element-wise division of the mobile node channel by the calibration node channel.

\subsection{Experimental Inter-User Channel Correlation Examination}~\label{subsec:ota_findings}
This subsection presents the measurement findings on inter-user channel correlation in the LOS and NLOS propagation environments.
We are primarily interested in how often high inter-user channel correlation happens for nearby users and users that are tens of meters apart.
And the experimental characterization of inter-user channel correlation as a function of user geo-position separation is a future research problem.

During measurements, we measure the channel between the mobile user and all $64$ base-station antennas.
In evaluations, when computing channel correlation for $M<64$, we sub-sample square sub-arrays of dimension $\sqrt{M} \times \sqrt{M}$.
For example, when $M=36$, all antennas from the first $6$ rows and first $6$ columns are selected.
The base-station array indexing and our implementation of the array sub-sampling can be found in~\cite{open_data_link}.
As measured channel $\snr$ of each antenna is over $15$ dB in all channel measurements, we treat the measured channels as perfect.

\noindent
\emph{OTA Finding 1 -- Close-by user channels correlation decays very slowly when $M>36$}:
We first evaluate the inter-user channel correlation between close-by users from the same cluster, i.e., {\sl intra-cluster correlation}.
Fig.~\ref{fig:intra_cor} presents the computed intra-cluster correlation of the $9$ clusters whose locations are described in Fig.~\ref{fig:exp-locs}.
Each data point on each curve is obtained by averaging over $15600$ unique channel pairs among the $25$ locations in all $52$ subcarriers.
And the error bars represent the variation of intra-cluster correlation across the $52$ different subcarriers.
The inter-user channel correlation in the i.i.d. Rayleigh fading channel is also presented as a benchmark.

For user channels measured in the LOS clusters, intra-cluster correlation reduces very slowly as $M$ increases.
Over the measurements, we observed intra-cluster correlation over $0.8$ for {\sl all} LOS clusters even when the base-station is equipped with $64$ antennas.
For the NLOS clusters, intra-cluster correlation reduces quickly when $M$ is smaller than $25$. When $M>36$, the intra-cluster correlation of the NLOS clusters reduces very slowly or does not decrease.
Furthermore, the NLOS clusters have lower intra-cluster channel correlation than the LOS clusters.
The lower intra-cluster correlation in NLOS clusters will later be confirmed by the theoretical results in Section~\ref{sec:RayAnalysis}.
In summary, for {\sl all} clusters, an intra-cluster channel correlation of more than $0.48$ is observed with the $64$-antenna array, which is more than $380$\% of the inter-user channel correlation in the i.i.d.\ Rayleigh fading channel.

\begin{figure*}[htbp]
  \mbox{}
  \hfill
  \begin{minipage}{.49\linewidth}
  \centering
  \subfloat[Intra-Cluster Correlation]{\label{fig:intra_cor}
                                                    \includegraphics[scale=.42]{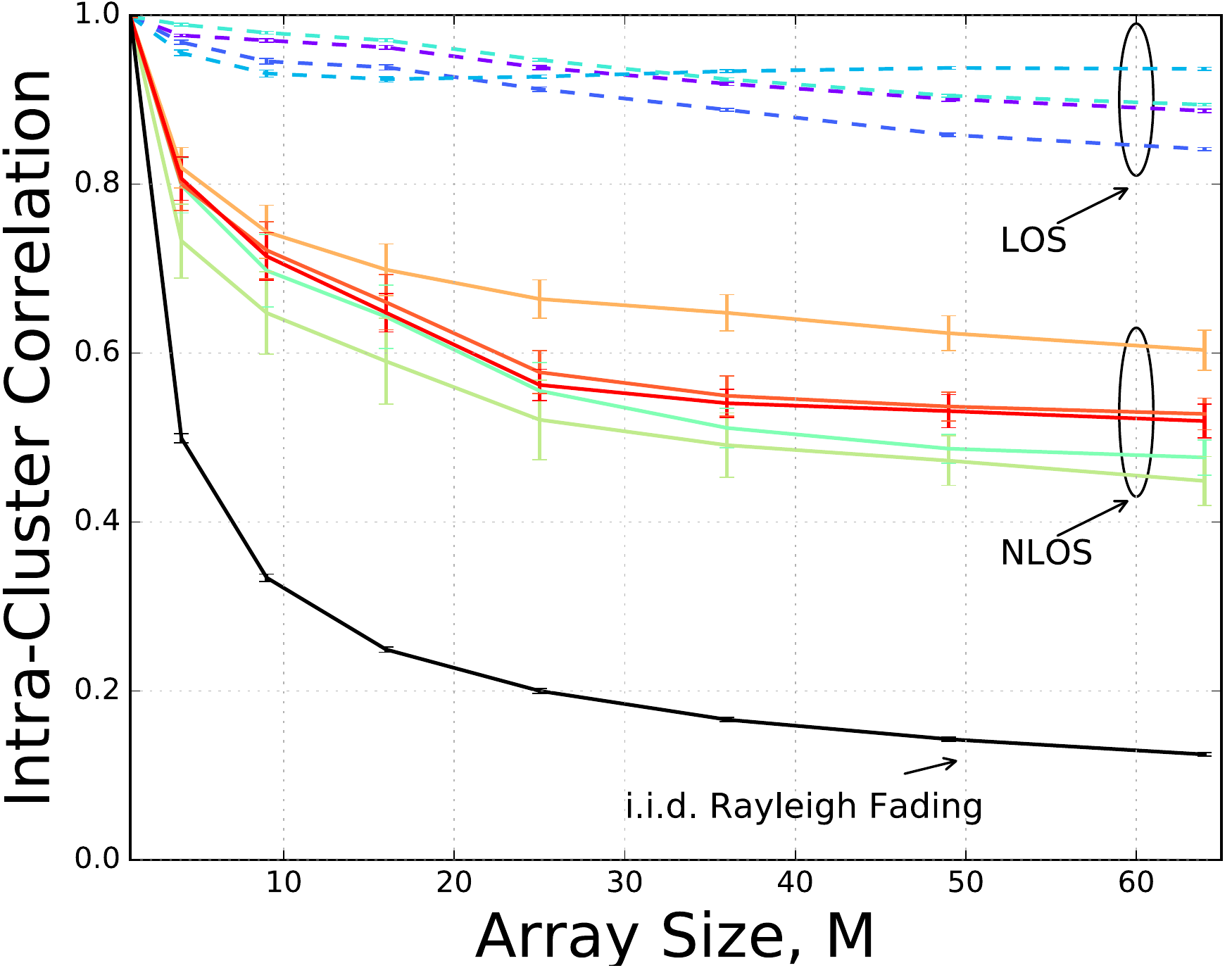}}
  \end{minipage}%
  \hfill
\begin{minipage}{.49\linewidth}
  \centering
  \subfloat[Inter-Cluster Correlation]{\label{fig:inter_cor}
                                                    \includegraphics[scale=.42]{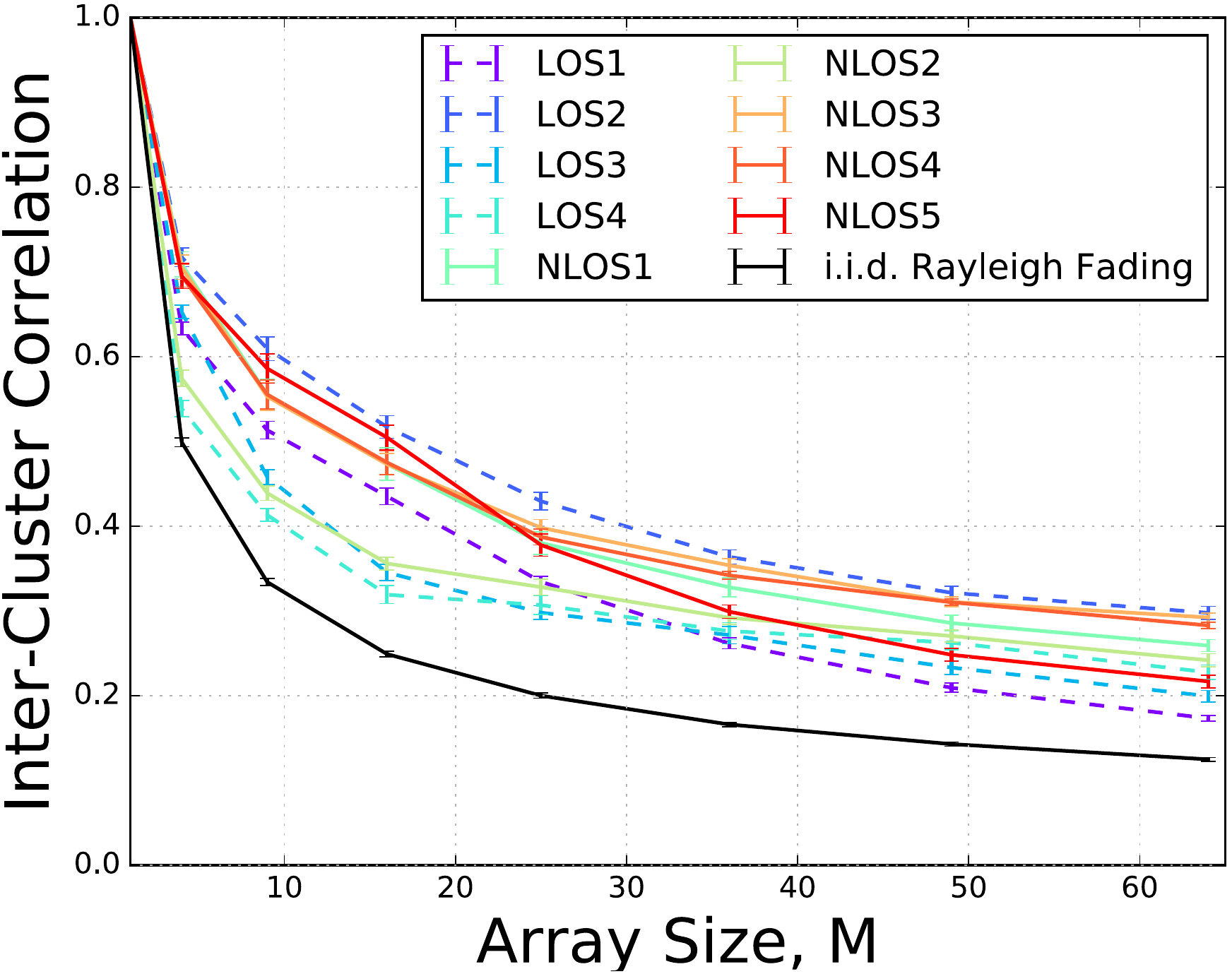}}
  \end{minipage}
  \hfill
  \mbox{}
\caption{
    Intra-cluster and inter-cluster inter-user channel correlation with the standard error bar.
    Both sub-figures share the same legend.
    The error bars denote the correlation variation across the $52$ subcarriers.
    The solid and dashed colored lines are for the NLOS clusters and the LOS clusters, respectively.
    The LOS and NLOS curves are obtained via using the inter-user channel correlation definition~\eqref{equ:alpha_def} based on measured channels.
    The measurement setup and inter-user channel correlation computation process can be found in Section III-A and Section III-B, respectively
    The solid black line is for the simulated i.i.d. Rayleigh fading channel, which is independent across subcarriers.
    For each subcarrier, the i.i.d. Rayleigh fading channel simulation used a sample size of $1000$.
}~\label{fig:correlations_VS_M}
\end{figure*}

\noindent
\emph{OTA Finding 2 -- Far-away user channels correlation decays faster than measured closed-by users, but slower than users in the i.i.d.\ Rayleigh fading channel}:
We next shift our attention to the channel correlation between users from different clusters, or {\sl inter-cluster correlation}. We first review the statistical average of the correlation between the $25$ users from each cluster to all users from the other $8$ clusters across the $52$ subcarriers.
Fig.~\ref{fig:inter_cor} provides the calculation results, and each point of every curve is acquired by averaging over $260000$ unique measured channel vector pairs. The error bars show the standard deviation across the $52$ subcarriers.
Compared to the intra-cluster correlation in Fig.~\ref{fig:intra_cor}, the inter-cluster correlation reduces at a higher rate as $M$ increases.
For both LOS and NLOS clusters, the inter-cluster channel correlation is lower than the intra-cluster correlation for all $M$ that is greater than $4$.
We remark that, despite being lower than intra-cluster correlation, the inter-cluster channel correlation is observed to be higher than the correlation in the i.i.d.\ Rayleigh fading channel.
For example, the inter-cluster correlation for NLOS $3$ with $64$ base-station antennas is $0.266$, which is over $210$\% of the inter-user channel correlation in i.i.d.\ Rayleigh fading.

\begin{figure*}[htpb]
  \centering
    \subfloat[Mean of $52$ Subcarriers]{~\label{fig:heat_mean}
  \includegraphics[width=0.5\textwidth]{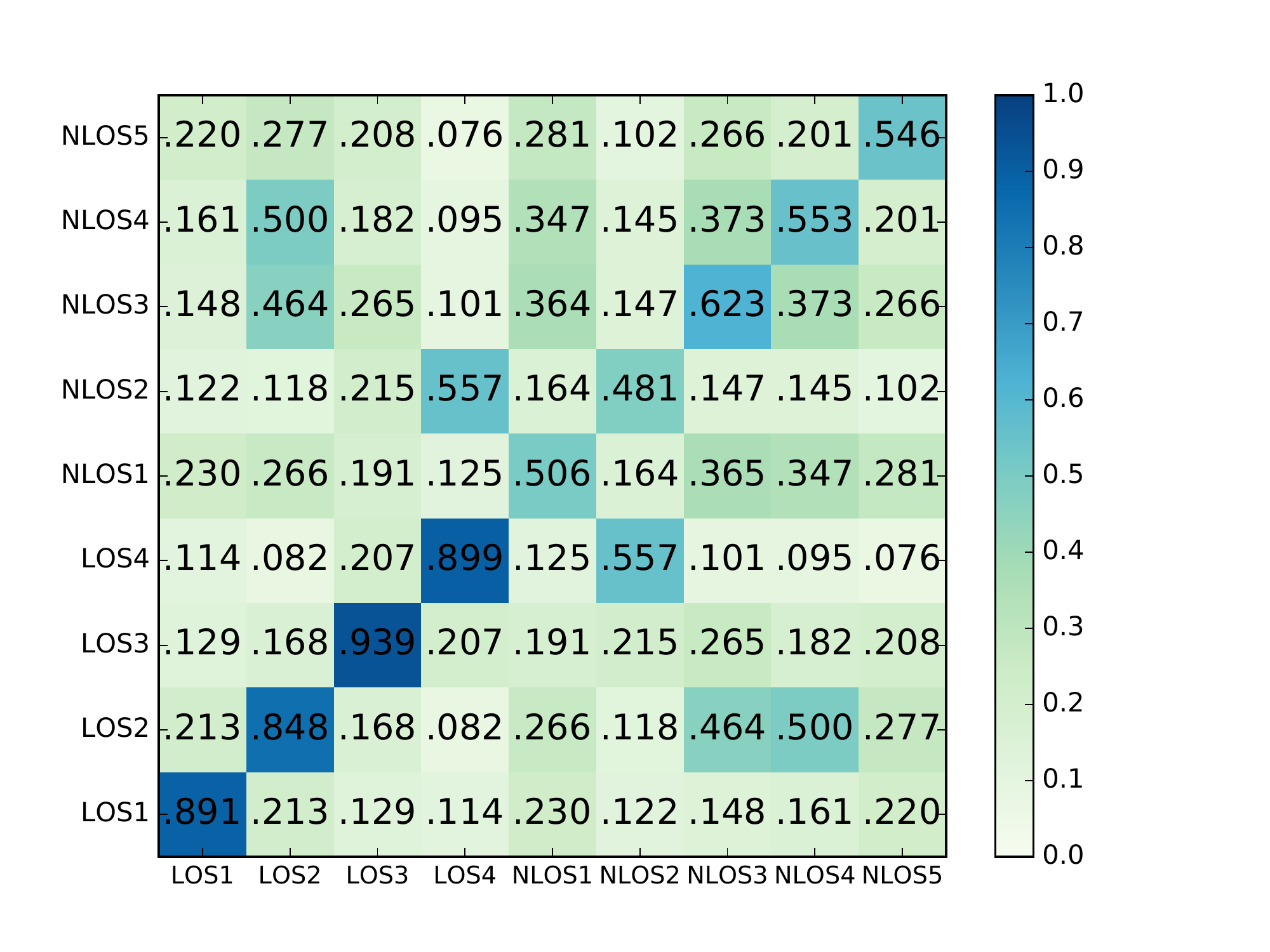}
    }
    \subfloat[Standard Deviation of $52$ Subcarriers]{~\label{fig:heat_std}
  \includegraphics[width=0.5\textwidth]{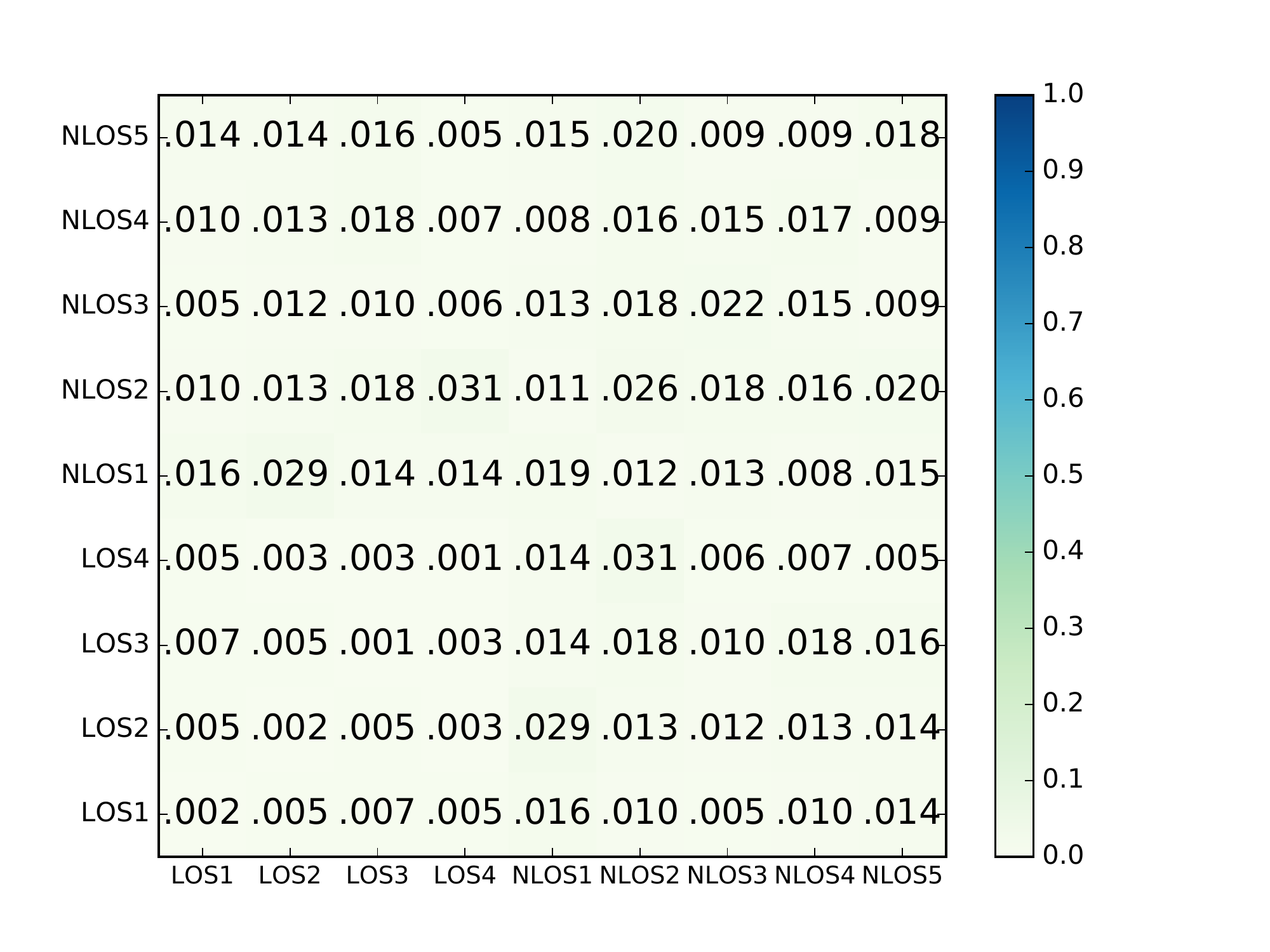}
  }
  \caption{Inter-user channel correlation mean and standard deviation of measured channels with $M=64$ between all pairs of clusters.
  Each grid is the inter-user channel correlation average or standard deviation between a cluster labeled on the $x$-axis and a cluster labeled on the $y$-axis.}~\label{fig:cor-heat-map}
\end{figure*}

We now examine the mean and standard deviation of the channel correlation between users from any pair of clusters.
In Figure~\ref{fig:heat_mean}, each grid of the matrix denotes the inter-user channel correlation between a different cluster pair when $M=64$.
The diagonal elements denote the intra-cluster channel correlation (each averaged over $15600$ channel vector pairs).
And the off-diagonal elements denote the inter-cluster correlation (each averaged over $32500$ channel vector pairs).
Interestingly, we find that the inter-cluster correlation of some cluster pairs is still significant even with a $64$-antenna base-station. For example, between Cluster LOS $4$ and Cluster NLOS $2$, the inter-cluster correlation is $0.557$.
And $30.56$\% of the $36$ cluster pairs have inter-cluster correlation that is at least twice higher than the correlation in the i.i.d. Rayleigh fading.
In summary, the statistical average of all far-away users' channel correlation reduces as the number of base-station antennas increases.
However, some far-away user pairs can still be of high correlation with a $64$-antenna base-station.

\noindent
\emph{OTA Finding 3 -- Inter-user angle proximity might cause high inter-user channel correlation}:
To provide insights into the measured high channel correlation between LOS $4$ and NLOS $2$, we now present spatial angle estimation examples of four measured users.
The four examples consist of two users from NLOS $2$, one from NLOS $3$, and one from LOS $4$.

We first focus on Fig.~\ref{fig:aoa-nlos2-1} and~\ref{fig:aoa-nlos2-2} that present the estimated angle energy map of two users from NLOS $2$. The two users demonstrate proximity in the angle space near $\left(28^{\circ}, 0^{\circ}\right)$ and $\left(15^{\circ}, 15^{\circ}\right)$, which might explain the observed $0.481$ intra-cluster correlation when $M=64$ in NLOS $2$.
Similarly, the channel correlation of $0.557$ between NLOS $3$ and LOS $2$ could be explained by the small angle difference in Fig.~\ref{fig:aoa-nlos2-1},~\ref{fig:aoa-nlos2-2},~\ref{fig:aoa-los4}.
And the large angle separations in Fig.~\ref{fig:aoa-nlos3} and Fig.~\ref{fig:aoa-nlos2-2} could explain the observed lower ($0.147$) inter-cluster correlation between NLOS $2$ and NLOS $3$.
For the interested readers, the measured channel dataset and our implementation of MUSIC~\cite{MUSIC-review} can be downloaded from~\cite{open_data_link}, which will allow exploring channels from $>200$ locations.
To understand the impact of the inter-user angle proximity, Section~\ref{sec:RayAnalysis} will compute the inter-user channel correlation as a function of the inter-user angle correlation in closed-form.

\begin{figure*}[htbp]
\centering
  \subfloat[NLOS 2, Location 1]{ \label{fig:aoa-nlos2-1}
\includegraphics[height=0.21\textwidth]{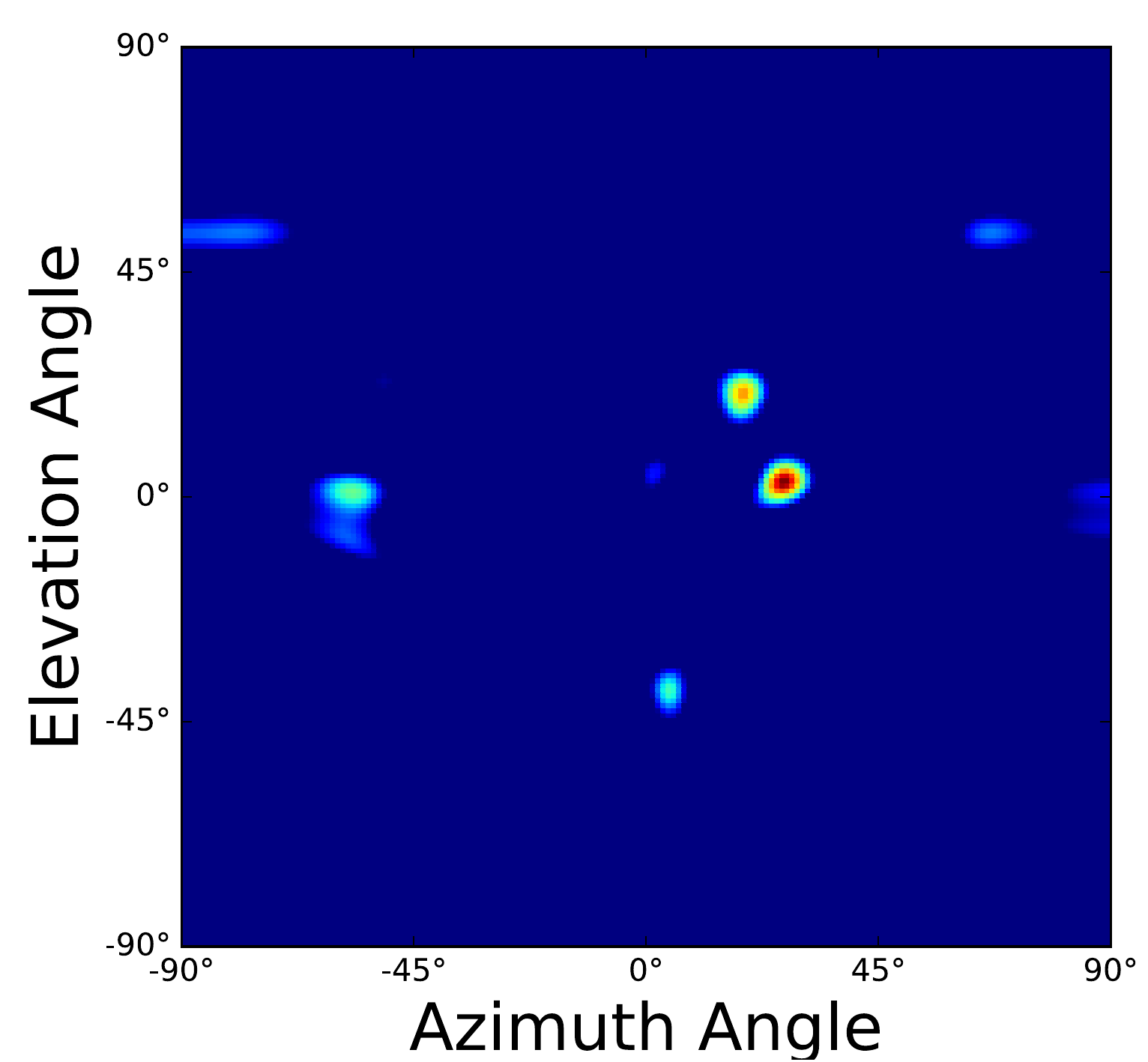}
  }
  \subfloat[NLOS 2, Location 2]{ \label{fig:aoa-nlos2-2}
\includegraphics[height=0.21\textwidth]{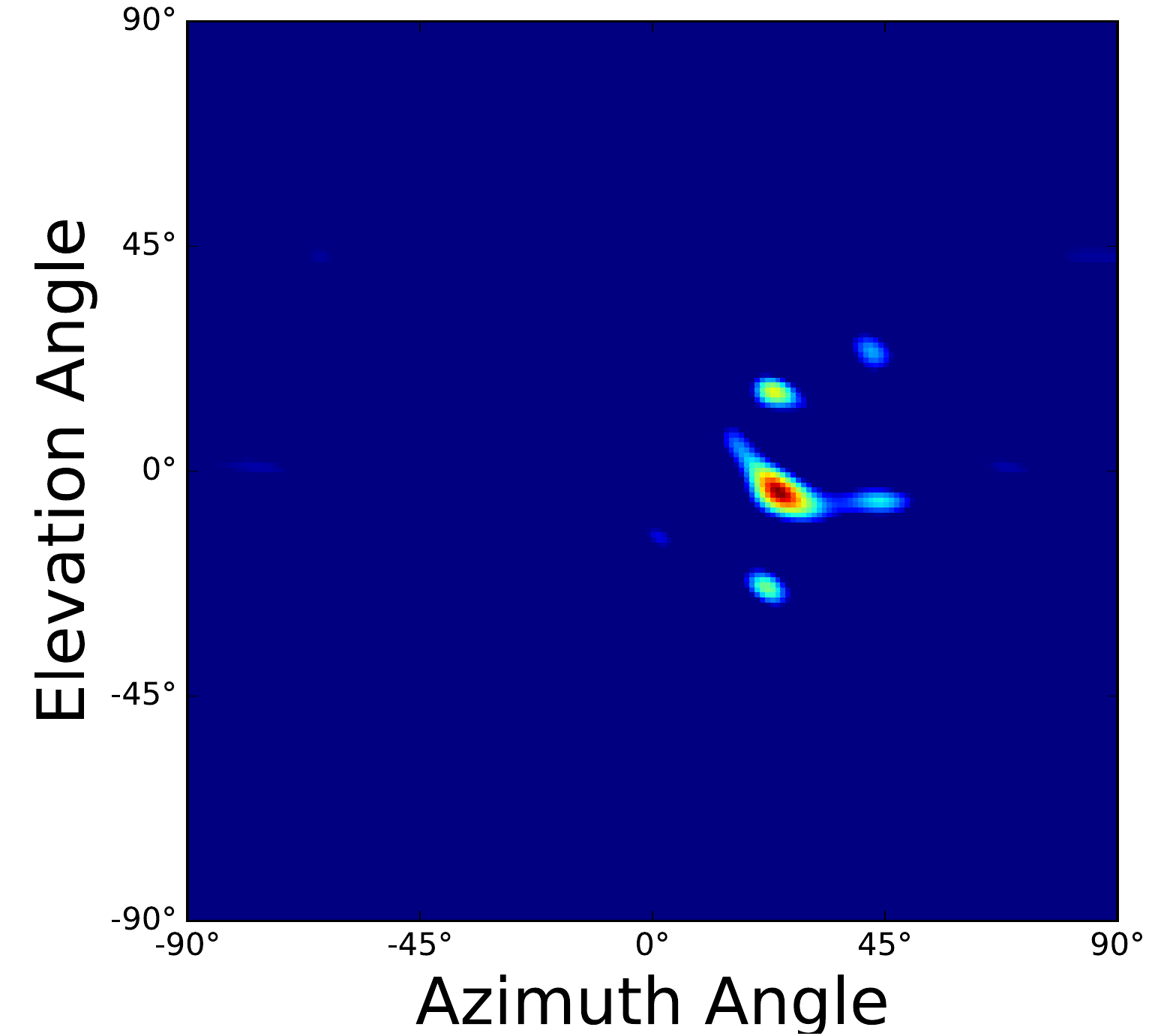}
  }
  \subfloat[NLOS 3, Location 1]{ \label{fig:aoa-nlos3}
\includegraphics[height=0.21\textwidth]{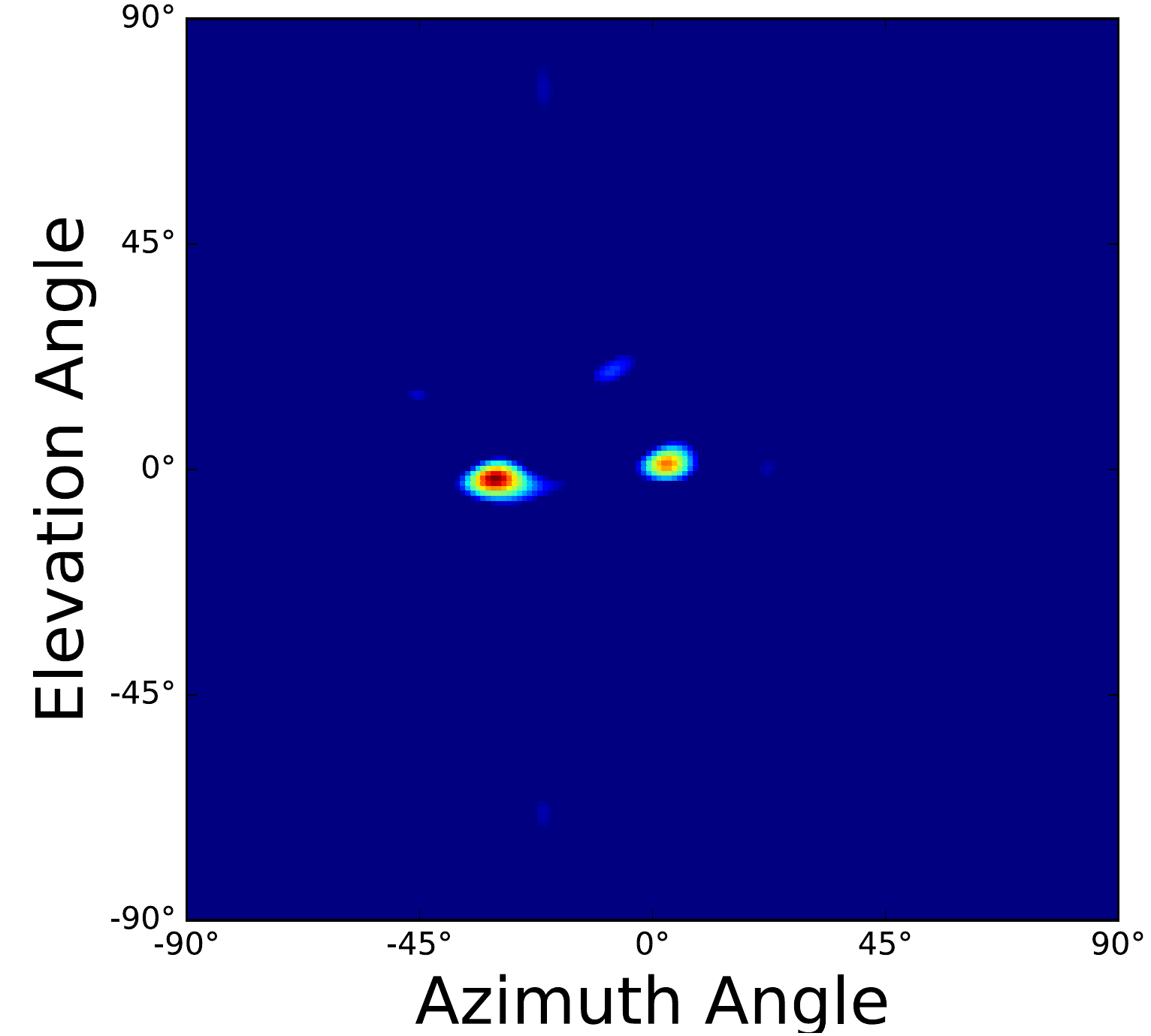}
  }
  \subfloat[LOS 4, Location 1]{ \label{fig:aoa-los4}
\includegraphics[height=0.21\textwidth]{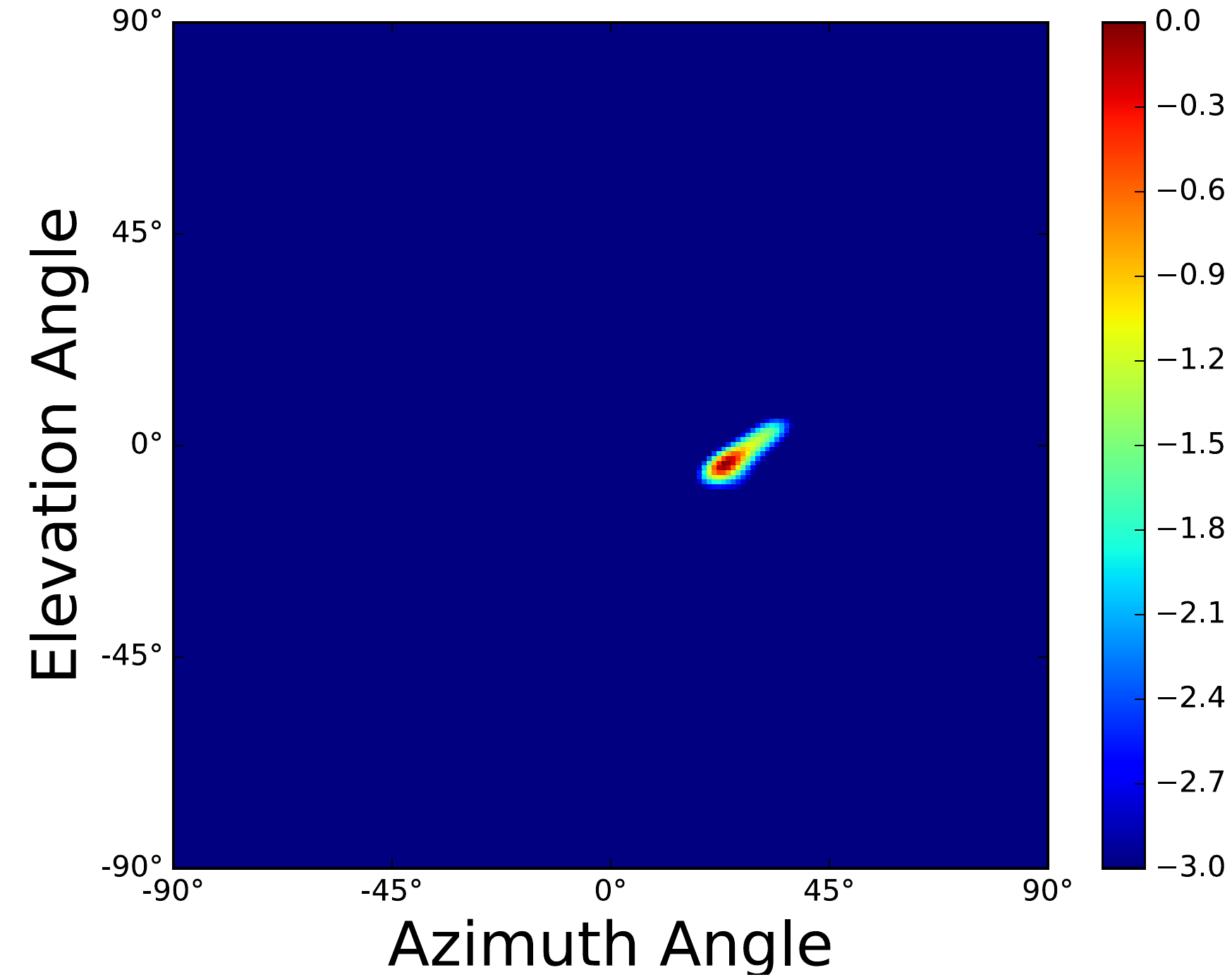}
  }
\caption{The estimated power angle map of different users via using MUSIC~\cite{MUSIC-review} based on the measured channels. Each point of each figure approximates the logarithmic energy of each direction.
The source code and the measured channel dataset are available in~\cite{open_data_link}.
}~\label{fig:aoa}
\end{figure*}

\noindent
\emph{OTA Finding 4 -- Inter-user channel correlation is near-constant across the subcarriers}:
We profiled the channel correlation variance over different frequency subcarriers.
The error bars in Fig.~\ref{fig:correlations_VS_M} denote the variation of the intra-cluster and inter-cluster correlations across all $52$ subcarriers.
Each grid of Fig.~\ref{fig:heat_std} presents the standard deviation of the average channel correlation when $M=64$.
The small error bars in Fig.~\ref{fig:correlations_VS_M} and small standard deviation values in Fig.~\ref{fig:heat_std} hence demonstrate that the intra-cluster and inter-cluster correlation is near-constant across the $20$ MHz band.

\noindent
\emph{OTA Finding 5 -- Close-by user channel correlation reduces with multi-path.}
By comparing intra-cluster user channel correlation in Fig.~\ref{fig:intra_cor}, we find that the intra-cluster channel correlation in LOS clusters is higher than that in NLOS clusters.
Fig.~\ref{fig:intra_cor} shows that the lowest observed LOS intra-cluster channel correlation ($0.838$) is higher than the highest NLOS intra-cluster channel correlation ($0.623$).

In our measured dataset, close-by users' high channel correlation barely reduces when $M>36$.
Furthermore, $30.56$\% of far-away user pairs have channel correlation that is over $200$\% higher than the correlation in the i.i.d. Rayleigh fading channel.
Besides, we observe that users with high channel correlation can have proximity in the angle space.

\section{User Correlation: A Spatial Perspective}\label{sec:RayAnalysis}
Section~\ref{sec:measurement} presents channel measurements that capture the inter-user channel correlation in a real-world propagation environment.
The spatial angle estimation examples in Fig.~\ref{fig:aoa} demonstrate that angles of different users can be close-by.
For users with independently distributed angles and various base-station array configurations, past theoretical analysis~\cite{favorable_prop, wu2017favorable} proved that inter-user channel correlation quickly reduces with the number of base-station antennas.
This section analyzes how inter-user angle correlation and base-station array configuration impact the inter-user channel correlation.

This section begins with the adopted spatial channel model.
Section~\ref{subsec:main_the} presents the main propositions.
The rest of the section utilizes specialized versions of the main propositions to explain the measured inter-user channel correlation.

\subsection{Spatial Channel Model}~\label{subsec:channel_model}
To model the massive MIMO channels with inter-user angle correlation, we consider a multiple-path spatial channel model, which is illustrated in Fig.~\ref{fig:cha_mdl}.
The base-station is equipped with an $M$-antenna planar array on the $x-y$ plane, whose setup is presented in Fig.~\ref{fig:planar}.
\begin{figure}[htbp]
  \centering
  \includegraphics[width=0.33\textwidth]{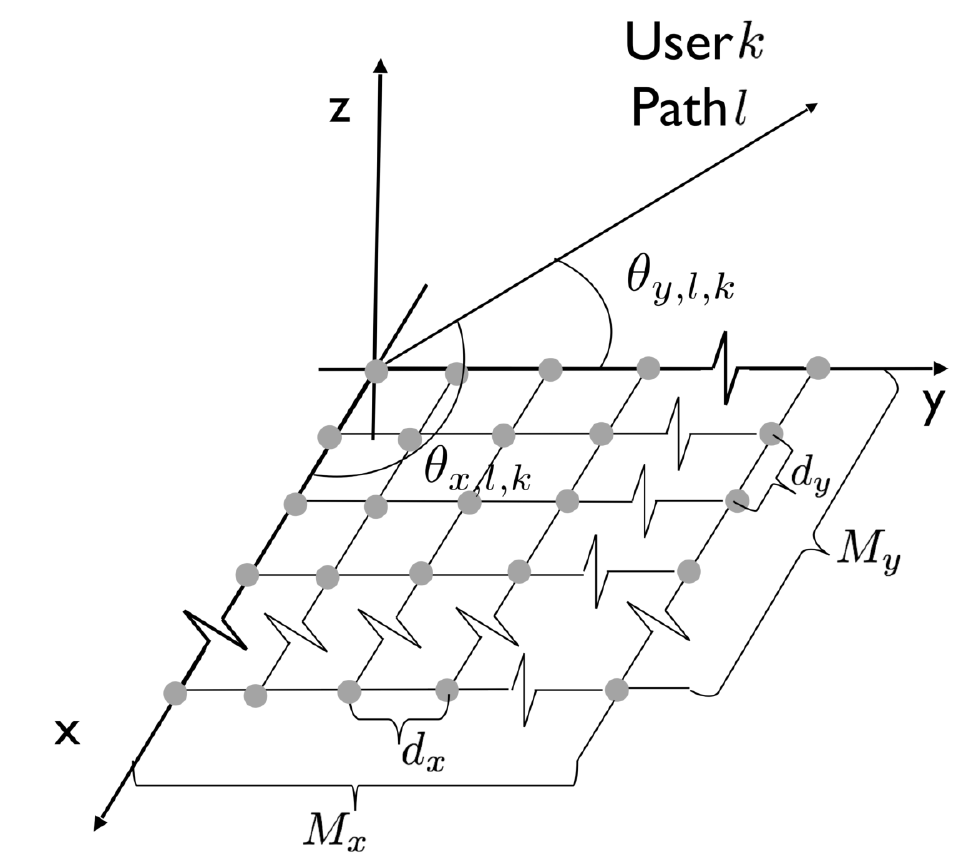}
  \caption{The setup of the considered 2-D planar array. The array is placed on the $x-y$ plane. There are $M_{x}$ ($M_{y}$) antennas on the $x$-axis ($y$-axis). The antenna spacing along the $x$-axis ($y$-axis) is $d_{x}$ ($d_{y}$).}~\label{fig:planar}
\end{figure}
On the $x$-axis, there are $M_{x}$ antennas uniformly spaced with spacing of $d_{x}$.
In each row, there are $M_{y}$ antennas uniformly spaced on $y$-axis with spacing of $d_{y}$.
The base-station is elevated high above the ground for a typical cellular network and free from local scatterers.
The base station, therefore, is in the far-field.
And the channel of User $k$ equals
\begin{equation}
\mathbf{h}_{k} = \sum_{l=1}^{L} \beta_{k,l} e^{j\phi_{k,l}} \mathbf{a}\left(\bm{\theta}_{k,l}\right),~\label{equ:ray_mdl}
\end{equation}
where $L$ is the number of paths,
$l$ is the index of the spatial path,
$\beta_{k,l}$ denotes the path gain,
$\phi_{k,l}$ is path phase,
and $\mathbf{a}\left(\bm{\theta}_{k,l}\right)$ is the array response vector for a plane wave at direction $\bm{\theta}_{k,l}$.
The array response vector is jointly determined by the carrier frequency and the array configuration.
As we are interested in explaining the measured channels in Section~\ref{sec:measurement}, we consider the case of the Uniform Planar Array (UPA), whose response vector is~\cite[Chapter 7]{bjornson2017massive}
\begin{align}
\mathbf{a}\left(\bm{\theta}\right) =
\Big[& 1, \
\dots,
e^{j\frac{2\pi}{\lambda}
\left( d_{x} m_{x} \sin \theta_{x} +
    d_{y} m_{y} \sin \theta_{y} \right)}, \
\dots, \notag \\
& e^{j\frac{2\pi}{\lambda}
\left( d_{x} \left(M_{x}-1\right) \sin \theta_{x} +
    d_{y} \left(M_{y}-1\right) \sin \theta_{y} \right)
  }\Big]
,~\label{equ:def_a_theta_UPA}
\end{align}
where $d_{x}$ is the the antenna spacing in the $x$ axis,
$d_{y}$ is the antenna spacing in the $y$ axis,
$m_{x}$ is the antenna index in the $x$ axis,
$m_{y}$ is the antenna index in the $y$ axis,
$\lambda$ is the wavelength of the subcarrier,
$\bm{\theta}=\left[\theta_{x}, \theta_{y}\right]$ denotes the spatial angles,
$\theta_{x}$ is the angle between $x$ axis and the AoD,
and $\theta_{y}$ is the angle between $y$ axis and the AoD.

Under channel model~\eqref{equ:ray_mdl}, the inter-user channel correlation~\eqref{equ:alpha_def} is determined by the joint distribution of user path phases, path gains, and path angles.
As the user to base-station distance is usually far larger than the wavelength, the path phases $\phi_{k,l}$ follow i.i.d. uniform distribution over $\left[0,2\pi\right]$.
In this section, we consider constant path gains, and the path angles are correlated between users.

   The multi-user angle joint distribution depends on the propagation environment, carrier frequency, user location, and user mobility.
To the best of our knowledge, the analytical modeling of multi-user angle distribution in practical massive MIMO downlink systems remains open.
This section models inter-user angle correlation of User $k$ Path $l$ and User $k^{'}$ Path $l^{'}$ in the $x$-axis by letting $\sin\big(\theta_{x,k,l}\big)- \sin\big(\theta_{x,k^{'},l^{'}}\big)$ follow uniform distribution over $\left[-\alpha_{x,l, l^{'}}, \alpha_{x,l, l^{'}}\right]$.
  Similarly, $\sin\big(\theta_{y,k,l}\big)- \sin\big(\theta_{y,k^{'},l^{'}}\big)$ follows independent uniform distribution over $\left[-\alpha_{y,l, l^{'}}, \alpha_{y,l, l^{'}}\right]$.
  In this model, constants $\alpha \in \left[0,1\right]$ correspond to the case of maximum inter-user angle separation.
  For any distribution with support on $\left[-\alpha, \alpha\right]$, it is known~\cite{cover1999elements} that the uniform distribution is entropy maximizing and thus minimizes the prior information of the joint multi-user angle distribution.
    A smaller $\alpha$ captures a stronger angle correlation.
  And  $\alpha_{x,l, l^{'}}=0$ suggests the corner case where the $\theta_{x,k,l}$ equals $\theta_{x,k^{'},l^{'}}$ with probability $1$.

 The uniform distribution model is also adopted in existing inter-user channel correlation analysis without inter-user angle correlation~\cite{marzetta2016fundamentals, bjornson2017massive, favorable_prop, wu2017favorable, masouros2015space, yang2017massive}.
 And the existing analysis can be viewed as specialized versions of this section's analysis by letting $\alpha_{x,l,l^{'}}=1$, $M_y=1$, $L=1$,  and $N=1$.
 We will later compare this section's analysis with the existing results to highlight the impact of inter-user angle correlation.
We consider it a future direction to analyze the inter-user channel correlation based on modeling the inter-user angle correlation in real-world environments.

\begin{myremark}
The covariance matrix method~\cite{adhikary2013joint},~\cite[(2.19)]{bjornson2017massive} can also model the joint user channel distribution by assuming near-static second-order covariance matrices.
The elements of the covariance matrices capture the correlations among the base-station antennas. The covariance matrix method is prevalent in the massive MIMO achievable rate characterizations.
This paper chooses the spatial channel model for two reasons.
The spatial channel model~\cite{marzetta2016fundamentals, bjornson2017massive, 3gpp.36.873} can capture the effect of both base-station array configuration and inter-user angle correlation.
Propositions in Section~\ref{subsec:main_the} will presents the inter-user channel correlation as a function of inter-user angle correlation and the base-station array configuration in closed-form.
Besides, most of the past massive MIMO inter-user channel correlation analysis adopts spatial channel models (without inter-user angle correlation).
Therefore, the spatial channel model enables a comparison to highlight the impact of inter-user angle correlation.
\end{myremark}

\subsection{Main Proposition}~\label{subsec:main_the}
We now present the main propositions that capture the inter-user channel correlation with inter-user angle correlation.
This subsection begins with the finite-array analysis, which is followed up by the large-array regime specification.
\begin{Proposition}[Inter-user Angle Correlation Increases Inter-user Channel Correlation]~\label{prop:main_finite}
For massive MIMO systems with a uniform planar array and in channel model~\eqref{equ:ray_mdl}, the square of the inter-user correlation~\eqref{equ:alpha_def} satisfies
\begin{align}
\Var \left[\frac{1}{M} \mathbf{h}_{k}^{H}\mathbf{h}_{k^{'}}\right]
=  \sum_{l=1}^{L} \sum_{l^{'}=1}^{L}  &  \left(\beta_{k,l} \beta_{k^{'},l^{'}}\right)^2\Big[\eta\left(d_{x} \alpha_{x,l,l^{'}}/\lambda , M_{x}\right) \notag \\
& \eta \left(d_{y} \alpha_{y,l,l^{'}}/\lambda, M_{y}\right)\Big], \label{equ:main_finite_var}
\end{align}
where
$\eta \left(c, M\right)= \frac{1}{M^2}\sum_{m_{1}=0}^{M-1}\sum_{m_{2}=0}^{M-1} \sinc\left( 2\pi c \left(m_{1}-m_{2} \right)\right)$.
\end{Proposition}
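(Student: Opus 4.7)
The plan is to compute $\Exp\bigl[|\mathbf{h}_{k}^{H}\mathbf{h}_{k'}|^2\bigr]$ directly in four steps: (i) reduce variance to second moment via the path-phase expectation; (ii) collapse the quadruple path-index sum using independence of the path phases; (iii) factor the UPA array-response inner product into $x$- and $y$-axis tensor components; and (iv) evaluate each one-dimensional angle expectation via the Fourier transform of a uniform density.

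For step (i), I would substitute~\eqref{equ:ray_mdl} into $\mathbf{h}_{k}^{H}\mathbf{h}_{k'}$, producing a double sum whose generic term carries the phase factor $e^{j(\phi_{k',l'}-\phi_{k,l})}$. For $k\neq k'$ these two phases come from independent families of i.i.d.\ uniform random variables on $[0,2\pi]$, so the mean vanishes and $\Var[\mathbf{h}_{k}^{H}\mathbf{h}_{k'}]$ coincides with $\Exp[|\mathbf{h}_{k}^{H}\mathbf{h}_{k'}|^2]$; this also matches the remark after~\eqref{equ:favorable_condition}.

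For step (ii), expanding $|\mathbf{h}_{k}^{H}\mathbf{h}_{k'}|^2$ yields a quadruple sum over $(l_1,l_1',l_2,l_2')$ with combined phase factor $e^{j(\phi_{k,l_2}-\phi_{k,l_1}+\phi_{k',l_1'}-\phi_{k',l_2'})}$. Because the phase families for Users $k$ and $k'$ are independent and internally i.i.d.\ uniform, taking the expectation forces $l_1=l_2$ and $l_1'=l_2'$. The quadruple sum therefore collapses to a double sum over $(l,l')$ weighted by $\beta_{k,l}^2\beta_{k',l'}^2\,\Exp\bigl[|\mathbf{a}(\bm\theta_{k,l})^{H}\mathbf{a}(\bm\theta_{k',l'})|^2\bigr]$, where the remaining expectation is over the angle distributions only.

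For steps (iii)--(iv), I would exploit the tensor-product structure of the UPA response~\eqref{equ:def_a_theta_UPA} to expand the squared magnitude as a sum over $(m_x,m_x',m_y,m_y')$ whose summand factors into an $x$-axis phasor in $\Delta_x=\sin\theta_{x,k,l}-\sin\theta_{x,k',l'}$ and a $y$-axis phasor in the analogous $\Delta_y$. By the modeling assumption, $\Delta_x$ and $\Delta_y$ are independent uniform on $[-\alpha_{x,l,l'},\alpha_{x,l,l'}]$ and $[-\alpha_{y,l,l'},\alpha_{y,l,l'}]$, so the expectation factors across axes and each one-dimensional expectation equals $\sinc\bigl(2\pi d_x\alpha_{x,l,l'}(m_x-m_x')/\lambda\bigr)$ (with its $y$-axis analogue) by the Fourier transform of the uniform density. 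Summing over antenna indices produces a factor of $M_x^2 M_y^2\,\eta(d_x\alpha_{x,l,l'}/\lambda,M_x)\,\eta(d_y\alpha_{y,l,l'}/\lambda,M_y)$, and dividing by $M^2=M_x^2 M_y^2$ recovers~\eqref{equ:main_finite_var}.

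The main subtlety is the phase-independence bookkeeping in step (ii): since the two phase families are independent, the only surviving quadruples are the ``diagonal'' ones $(l_1,l_1')=(l_2,l_2')$; alternative pairings (for instance $l_1=l_1'$, $l_2=l_2'$) contribute nothing because they would require matching phases across users, which is impossible when $k\neq k'$. Everything else is routine Fourier analysis of a uniform density and careful index manipulation.
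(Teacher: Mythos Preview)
Your proposal is correct and follows essentially the same route as the paper: the paper likewise reduces to the second moment via the i.i.d.\ uniform path phases, collapses the quadruple path sum to the diagonal $(l,l')$ terms, and then evaluates $\Exp\bigl[|\mathbf{a}^{H}(\bm\theta_{k,l})\mathbf{a}(\bm\theta_{k',l'})|^2\bigr]$ by factoring the UPA response across the $x$- and $y$-axes and integrating against the uniform densities to obtain the $\sinc$ sums (the paper packages this last step as a separate lemma, but the content is identical).
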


\begin{proof}
  The main proof is to use the channel definition~\eqref{equ:ray_mdl} and decompose the target value into the contributions by inter-user path pairs.
  The rest of the proof is completed by computing the mean and the variance of channel dot product of two spatial paths with correlated angles. Appendix~\ref{appendix:main_finite} provides the complete proof.
\end{proof}

Proposition~\ref{prop:main_finite} captures the inter-user channel correlation with inter-user angle correlation in closed-form.
Proposition~\ref{prop:main_finite} shows that the inter-user channel correlation is determined by the inter-user angle correlation and the base-station array configuration, which includes the number of antennas and inter-antenna spacing.
Due to the $\sinc$ sum, the impact of the inter-user angle correlation and the base-station antenna spacing on inter-user channel correlation is not immediate, which will be clarified by the following asymptotic specification.

\begin{Proposition}[Asymptotic Inter-user Channel Correlation]~\label{prop:main_asy}
  For a massive MIMO system with a square uniform planar array, where $M_{x}=M_{y}=\sqrt{M}$, the square of the inter-user channel correlation~\eqref{equ:alpha_def} satisfies the following as $M\to\infty$
  \begin{align}
 \Var \left[\frac{1}{M} \mathbf{h}_{k}^{H}\mathbf{h}_{k^{'}}\right] \cong
 \sum_{l=1}^{L} \sum_{l^{'}=1}^{L}  &  \left(\beta_{k,l} \beta_{k^{'},l^{'}}\right)^2\Big[\tilde{\eta}\left(d_{x} \alpha_{x,l,l^{'}}/\lambda , M_{x}\right) \notag \\
& \tilde{\eta} \left(d_{y} \alpha_{y,l,l^{'}}/\lambda, M_{y}\right)\Big],
\end{align}
where $\tilde{\eta}\left(c, M\right)=\bm{1}_{c = 0} + \bm{1}_{c \neq 0}  \frac{1}{2 c M}$. Here, $f(M)\cong g(M)$ denotes the asymptotic equivalence relation~\cite[1.4]{de1981asymptotic}, which holds if and only if $\lim_{M\to\infty} f(M)/g(M)= 1$.
\end{Proposition}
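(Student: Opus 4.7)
The plan is to reduce the multi-path claim to a scalar asymptotic---namely, showing $\eta(c, M) \cong \tilde{\eta}(c, M)$ as $M\to\infty$ for every fixed $c \geq 0$---and then to lift this equivalence through the finite double sum in Proposition~\ref{prop:main_finite}. Two observations make the lifting automatic: the coefficients $(\beta_{k,l}\beta_{k',l'})^2$ are non-negative, so no cancellation can erase the asymptotically dominant term, and each product $\eta(\cdot, M_x)\eta(\cdot, M_y)$ inherits the asymptotic equivalence of its two factors with $M_x = M_y = \sqrt{M}$.

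The core technical step is evaluating $\eta(c, M)$ for $c>0$. First I would substitute $k = m_1 - m_2$ to rewrite
\begin{equation*}
\eta(c, M) = \frac{1}{M}\sum_{k=-(M-1)}^{M-1} \sinc(2\pi c k) \;-\; \frac{1}{M^2}\sum_{k=-(M-1)}^{M-1} |k|\,\sinc(2\pi c k).
\end{equation*}
Using $|k|\,\sinc(2\pi c k) = \sin(2\pi c |k|)/(2\pi c)$ for $k\neq 0$, the second sum collapses to $\frac{1}{\pi c}\sum_{k=1}^{M-1}\sin(2\pi c k)$, whose partial sums are bounded by a $c$-dependent constant via a closed-form geometric series; that term therefore contributes $O(1/M^2)$. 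For the first sum I would apply Poisson summation to $f(x) = \sinc(2\pi c x)$, whose Fourier transform is $\tfrac{1}{2c}\bm{1}_{|\xi|<c}$; for $c \in (0,1)$ only the zero-frequency sample lies in the support, so $\sum_{k\in\mathbb{Z}}\sinc(2\pi c k) = 1/(2c)$. Combined with an $O(1/M)$ sinc-tail truncation estimate this gives $\eta(c, M) = \frac{1}{2cM} + O(1/M^2)$, matching $\tilde{\eta}(c, M)$. The case $c = 0$ is immediate, since $\sinc(0) = 1$ makes $\eta(0, M) = 1 = \tilde{\eta}(0, M)$ identically.

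To finish, I would combine the regimes term by term: when both $c_x, c_y$ vanish, the product is of order $1$; when exactly one vanishes, it is of order $1/\sqrt{M}$; and when neither vanishes, it is of order $1/M$. Non-negativity of the $\beta^2$ weights guarantees that the leading order present in the sum is not erased by sign cancellation, so the asymptotic equivalence passes from each summand to the full double sum, completing the proof.

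The main obstacle will be the Poisson-summation step and its sensitivity to whether $c$ lies inside, on the boundary of, or outside the open interval where only the zero Fourier mode is captured: the formula $\tilde{\eta}(c, M) = 1/(2cM)$ would break down at integer values of $c$ because additional Fourier samples enter the support. Physically $c = d\,\alpha/\lambda$ with $\alpha \in [0,1]$ and half-wavelength spacing $d \leq \lambda/2$, so $c \in [0, 1/2]$, safely in the interior; I would note this restriction explicitly rather than attempting to handle the boundary exceptions.
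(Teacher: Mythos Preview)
Your proposal is correct and follows essentially the same route as the paper: reduce to the scalar asymptotic $\eta(c,M)\cong\tilde\eta(c,M)$, handle $c=0$ trivially, and for $c>0$ perform the $k=m_1-m_2$ substitution together with Poisson summation (the paper packages this as Lemma~\ref{append:lemma_2d_sinc_sum}). If anything, you are more careful than the paper on two points it glosses over---the non-cancellation argument needed to lift term-wise equivalence to the full sum, and the restriction $c\in[0,1/2]$ that keeps Poisson summation from picking up extra Fourier modes.
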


\begin{proof}
Appendix~\ref{appendix:asy_proof} presents the proof, which is by combining Proposition~\ref{prop:main_finite} and Lemma~\ref{append:lemma_2d_sinc_sum} in Appendix~\ref{append:user_cor}.
\end{proof}

Proposition~\ref{prop:main_asy} captures the scaling of the inter-user channel correlation in the large-array regime. By definition, $\cong$ denotes asymptotic equivalence.
Proposition~\ref{prop:main_asy} considers the square array for symbolic simplification. Similar results can be extended for UPAs where $M_{x}$ and $M_{y}$ grows at a fixed rate as $M$ goes to infinite.

The square of the inter-user channel correlation is the linear sum of the contributions from each inter-user path pair. For path pair that consists of User $k$ Path $l$ and User $k^{'}$ Path $l^{'}$,
the contribution is $\left(\beta_{k,l} \beta_{k^{'},l^{'}}\right)^2
 \tilde{\eta} \left(d_{x}\alpha_{x,l,l^{'}}/\lambda, M_{x} \right)
 \tilde{\eta} \left(d_{y}\alpha_{y,l,l^{'}}/\lambda , M_{y}\right)$.

When $\alpha_{x,l,l^{'}}$ and  $\alpha_{y,l,l^{'}}$ equal $0$, the AoD of User $k$ Path $l$ equals the AoD of User $k^{'}$ Path $l^{'}$ with probability $1$, which means a always shared angle.
Such shared angle leads to inter-user channel correlation converges to a positive constant that is independent of $M$.
Hence, the inter-user channel correlation does not converge to zero in the large-array regime when a shared angle exists.

For a more practical case where $\alpha_{x,l,l^{'}}>0$ or $\alpha_{y,l,l^{'}}>0$  for all path pairs, the inter-user channel correlation does reduce with the number of antennas at a rate of $\frac{\lambda}{ d_{x}\alpha_{x,l,l^{'}} M_{x}}
\frac{\lambda}{d_{y} \alpha_{y,l,l^{'}}M_{y}} $.
The inter-user channel correlation increases with wavelength and inter-user angle correlation.
And a larger inter-antenna spacing or more base-station antennas leads to a smaller inter-user channel correlation.

Propositions~\ref{prop:main_finite} and~\ref{prop:main_asy} further capture the interplay between inter-user angle correlation and base-station antenna spacing. For the $x$ ($y$) axis, the channel correlation contribution term is determined by the product $\alpha_{x} d_{x}$ ($\alpha_{y} d_{y}$), where the angle correlation factor serves as a scaling constant before the base-station antenna spacing.
This observation suggests that one can adapt the antenna spacing $d$ inversely proportional to $\alpha$ to achieve the exact inter-user channel correlation.
In other words, manipulating the base-station array configuration is an attractive option to reduce inter-user channel correlation.

Another practical setup is the space-contained base-station, where the base-station inter-antenna spacing reduces inversely proportionally to $M$ due to limited base-station space.
For the space-constrained systems, the $d_{x}M_{x}$ ($d_{y} M_{y}$) is a fixed constant.
And Proposition~\ref{prop:main_asy} proves that the inter-user channel correlation would converge to a positive constant, which is different from the fixed inter-antenna spacing case.
With fixed inter-antenna spacing, the inter-user channel correlation will converge to zero as the number of base-station antenna $M$ increases to infinity.
Therefore, the limited base-station array aperture in space-constrained systems can lead to high inter-user channel correlation even for very large $M$.

Propositions~\ref{prop:main_finite} and~\ref{prop:main_asy} characterize the impact of the inter-user angle correlation and base-station array configuration on inter-user channel correlation in the finite-array and the large-antenna regimes. Section~\ref{sec:Numerical} uses numerical experiments to validate the finite-array analysis and examine the large-array analysis accuracy.

\subsection{Case Studies}~\label{subsec:theory_case_study}
This subsection leverages Propositions~\ref{prop:main_finite} and~\ref{prop:main_asy} to explain the measured OTA channels and past theoretical results.
We first employ the propositions to explain the OTA channel observations in Section~\ref{sec:measurement}.
The rest of this subsection connects our characterization to past results on inter-user channel correlation without inter-user angle correlation.

\subsubsection{OTA Observations Explanations}

\noindent
\emph{OTA Finding 1 -- Close-by user channel correlation decays very slowly.}
Past research demonstrated~\cite{kolmonen2010measurement,liu2012cost} that close-by users are of a higher probability to have close-by angles due to shared scatters.
Therefore, it is reasonable to consider that the inter-user angle correlation reduces with the inter-user distance.
Propositions~\ref{prop:main_finite} and~\ref{prop:main_asy} demonstrate that the inter-user channel correlation increases with the inter-user angle correlation in both the finite-array and large-array regimes.
Therefore, the combination of the theoretical results and past observations explains the first OTA finding.

\noindent
\emph{OTA Finding 2 -- Far-away user channels correlation decays faster than measured closed-by users, but slower than users in the i.i.d.\ Rayleigh fading channel.}
In practical environments, two clusters of far-away users may share the main energy from similar angles.
In addition, the measurement base-station employed patch antennas with $3$-dB beam-width of around $120$-degree.
Therefore, the angle correlation of users from far-away clusters could be significant, which leads to a high channel correlation among some far-away users.

\noindent
\emph{OTA Finding 3 -- Inter-user angle proximity might cause high inter-user channel correlation.}
Under the considered channel model~\eqref{equ:ray_mdl}, inter-user angle proximity means high inter-user angle correlation. And Propositions~\ref{prop:main_finite} and~\ref{prop:main_asy} prove that inter-user channel correlation increases with inter-user angle correlation.

\noindent
\emph{OTA Finding 4 -- Inter-user correlation is near-constant across the subcarriers.}
Proposition~\ref{prop:main_finite} describes the inter-user channel correlation.
In~\eqref{equ:main_finite_var}, the impact of the frequency is captured by the carrier wavelength.
The central carrier frequency is $2.4$ GHz on the measured ISM band with a frequency bandwidth of $20$ MHz.
Therefore, the lowest and highest subcarriers will differ by $0.83$\% in wavelength, which is part of the $\sinc$ function input.
The smaller $\sinc$ function input difference explains the near-constant inter-user channel correlation across the measured subcarriers.
To better understand the implication, we remove the $\sinc$ function by considering the large-array specification.
Proposition~\ref{prop:main_asy} shows that the lowest subcarrier will have $1.68$\% higher channel correlation than the highest subcarrier in the large-array regimes, which matches with the OTA Finding $4$.

We note that the above cross-subcarrier analysis assumes the same spatial paths across different subcarriers.
This assumption is widely adopted in the state-of-the-art channel modeling~\cite{3gpp.36.873} for TDD-based systems.
For FDD-based massive MIMO systems~\cite{marzetta2016fundamentals,bjornson2017massive,zhang2018directional}, the uplink and the downlink channel central frequencies can have a large frequency gap.
The gap indicates that the spatial path difference might lead to a more considerable difference in inter-user channel correlation between the uplink and the downlink channels.

\noindent
\emph{OTA Finding 5 -- Close-by user channel correlation reduces with multi-path.}
To explain the impact of the multi-path effect, we consider the keyhole channels~\cite{marzetta2016fundamentals} with a different number of keyholes.
The keyhole channel can be a special case of model~\eqref{equ:ray_mdl} by letting each user channel has a path passing each keyhole.
For each keyhole, inter-user angle correlation is captured by $\bm{\alpha}_{o}=\left(\alpha_{x,o}, \alpha_{y,o}\right)$.
The inter-user angle correlation of paths through different holes adopts $\alpha=1$ to model the independence of keyholes.
To simplify the expression, we use $\beta_{k,l}=\beta_{k^{'},l^{'}}=\frac{1}{\sqrt{L}}$.

Proposition~\ref{prop:main_finite} can give the inter-user channel correlation with the keyhole channel specifications. After some algebraic manipulations, the inter-user channel correlation~\eqref{equ:alpha_def} satisfies
\begin{align}
  \Var \left[\frac{1}{M} \mathbf{h}_{k}^{H}\mathbf{h}_{k^{'}}\right]
= & \Big[\frac{1}{L} \eta \left(\frac{d_{x} \alpha_{x,o}}{\lambda}, M_{x}\right) \eta \left(\frac{d_{y} \alpha_{y,o}}{\lambda}, M_{y}\right) + \notag \\
 & \frac{L-1}{L} \eta\left(\frac{d_{x} }{\lambda} , M_{x}\right)  \eta \left(\frac{d_{y}}{\lambda}, M_{y}\right)\Big], \label{equ:keyhole_finite}
\end{align}
where $\eta\left(c, M\right)$ is defined in Proposition~\ref{prop:main_finite} as $\sum_{m_{1}=0}^{M-1}\sum_{m_{2}=0}^{M-1} \sinc\left( 2\pi c \left(m_{1}-m_{2} \right)\right)$.
For the keyhole channel, the inter-user channel correlation can be viewed as a weighted sum of two terms.
The first term $\eta \left(\frac{d_{x} \alpha_{x,o}}{\lambda}, M_{x}\right) \eta \left(\frac{d_{y} \alpha_{y,o}}{\lambda}, M_{y}\right)$ is the inter-user channel correlation of a LOS channel with angle correlation of $\bm{\alpha}_{o}$. And $\eta\left(\frac{d_{x} }{\lambda} , M_{x}\right)  \eta \left(\frac{d_{y}}{\lambda}, M_{y}\right)$
is the inter-user channel correlation of a LOS channel with independent angles.
Therefore, as $L$ increases, the inter-user channel correlation will be weighted towards the independent angle case, which results in a lower inter-user channel correlation.

We next extend the observation to the asymptotic regime with $M_{x}=M_{y} =\sqrt{M}$. By Proposition~\ref{prop:main_asy}, the inter-user channel correlation satisfies
\begin{align}
  \Var \left[\frac{1}{M} \mathbf{h}_{k}^{H}\mathbf{h}_{k^{'}}\right]
\cong & \Big[\frac{1}{L} \tilde{\eta} \left(\frac{d_{x} \alpha_{x,o}}{\lambda}, M_{x}\right) \tilde{\eta} \left(\frac{d_{y} \alpha_{y,o}}{\lambda},  M_{y}\right) + \notag \\
& \frac{L-1}{L} \frac{\lambda}{2 d_{x}  M_{x}} \frac{\lambda}{2 d_{y}  M_{y}}\Big], \label{equ:keyhole_asy}
\end{align}
where $\tilde{\eta}\left(c, M\right)=\bm{1}_{c = 0} + \bm{1}_{c \neq 0}  \frac{1}{2 c M}$, and $f(M)\cong g(M)$ means asymptotic equivalence.
For the above keyhole channel, both~\eqref{equ:keyhole_finite} and~\eqref{equ:keyhole_asy} show that the multi-path effect makes the inter-user channel correlation closer to the inter-user channel correlation of a LOS channel with independent angles. This observation again verifies OTA Finding $5$.

\subsubsection{Links to Systems without Inter-User Angle Correlation}
As mentioned in the spatial channel model~\eqref{equ:ray_mdl} setup, user channels without inter-user angle correlation can be viewed as specialized versions of the studied channel model by letting $\alpha_{x}=1$ and $\alpha_{y}=1$.
The link inspires us to compare this section's analysis with past results to highlight the impact of inter-user angle correlation.
This subsection begins with the independent AoD specification, which is followed by the LOS and Rayleigh fading cases.

\noindent
\emph{\sl Uniform Linear Array Specification}
We now adapt Propositions~\ref{prop:main_finite} and~\ref{prop:main_asy} to describe the inter-user channel correlation in the special case of ULA by letting $M_{y}=1$, $d=d_{x}$, and number of base-station antennas $M=M_{x}$.
Here, we consider the ULA as the past results on inter-user angle correlation~\cite{favorable_prop,wu2017favorable} adopted ULA.
For the above described ULA system, Proposition~\ref{prop:main_finite} gives the square of the inter-user channel correlation~\eqref{equ:alpha_def} $\Var \left[\frac{1}{M} \mathbf{h}_{k}^{H}\mathbf{h}_{k^{'}}\right]$ is
\begin{align}
 \sum_{l=1}^{L} \sum_{l^{'}=1}^{L}\left(\beta_{k,l} \beta_{k^{'},l^{'}}\right)^2 \eta\left(\frac{d \alpha_{l,l^{'}} }{\lambda} , M \right), \notag
\end{align}
where $\eta \left(c, M\right)=\sum_{m_{1}=0}^{M-1}\sum_{m_{2}=0}^{M-1} \sinc\left( 2\pi c \left(m_{1}-m_{2} \right) / \lambda\right)$, and $\alpha_{l,l^{'}}=\alpha_{x,l,l^{'}}$.
As $M\to\infty$, The square of inter-user channel correlation in the large-array regime is provided by Proposition~\ref{prop:main_asy} as
\begin{equation}
\sum_{l=1}^{L} \sum_{l^{'}=1}^{L}  \left(\beta_{k,l} \beta_{k^{'},l^{'}}\right)^2
\left(\bm{1}_{c = 0} + \bm{1}_{c \neq 0}  \frac{\lambda}{2 d \alpha_{l,l^{'}} M}\right).\label{equ:ULA_spec}
\end{equation}

\noindent
\emph{\sl LOS channel with independent inter-user angles:}
Past works labels users in LOS environments where $\sin\big(\theta_{k}\big) - \sin\big(\theta_{k}^{'}\big)$ follows i.i.d.\ uniform distribution over $\left[-1, 1\right]$ as
``Uniform Random LOS channel''~\cite{favorable_prop,wu2017favorable, matthaiou2018does, marzetta2016fundamentals}.
And~\cite{favorable_prop, matthaiou2018does}~\cite[Table 7.1]{marzetta2016fundamentals} showed that with half-wavelength spaced ULA, the square of the asymptotic inter-user channel correlation equals $\frac{1}{M}$ in the  ``Uniform Random LOS channel''.
Such result is immediate by substituting~\eqref{equ:ULA_spec} with $d=\frac{\lambda}{2}$, $\beta_{k,1}=\beta_{k^{'},1}=1$, $\alpha=1$,  and $L=1$.

\noindent
\emph{\sl The i.i.d.\ Rayleigh  fading channel:}
Finally, the past massive MIMO research~\cite{favorable_prop, marzetta2016fundamentals} demonstrated that in the i.i.d. Rayleigh fading channel, inter-user channel correlation satisfies that
\begin{equation}
  \Var \left[\frac{1}{M} \mathbf{h}_{k}^{H}\mathbf{h}_{k^{'}}\right] = \frac{1}{M}.~\label{equ:var_rayleigh}
\end{equation}
Recall that we can link the spatial channel model to the Rayleigh fading channel by $L\to \infty$.
When $L\to \infty$ and $\beta_{k,1}=\beta_{k^{'},1}=\frac{1}{\sqrt{L}}$, $\Var \left[\frac{1}{M}\mathbf{h}^{H}_{k}\mathbf{h}_{k^{'}}\right]$ in~\eqref{equ:ULA_spec} converges to $\frac{\lambda}{2 d \alpha_{l,l^{'}} M}$, which matches~\eqref{equ:var_rayleigh}.

We hence conclude that the past analytical results in both ``Uniform Random LOS'' channel and the i.i.d.\ Rayleigh fading channel without inter-user angle correlation are specialized versions of Proposition~\ref{prop:main_finite} and Proposition~\ref{prop:main_asy}.

\section{Numerical Evaluations}\label{sec:Numerical}
Section~\ref{sec:RayAnalysis} captures the inter-user channel correlation in massive MIMO systems with inter-user angle correlation in closed-form.
This section adopts numerical experiments to evaluate the accuracy of the theoretical analysis, Propositions~\ref{prop:main_finite} and Propositions~\ref{prop:main_asy}.
In addition, the simulation results confirm that inter-user angle correlation can result in the measured OTA inter-user channel correlation.
This section considers massive MIMO systems with UPA or ULA array configurations in the LOS and NLOS environments to match Section~\ref{sec:measurement} and Section~\ref{sec:RayAnalysis}.

In this section, we simulate the NLOS propagation environments via the keyhole channel~\cite{marzetta2016fundamentals} where the inter-user angles demonstrate pairwise correlation in both the $x$ and $y$ axes.
There are $L$ key-holes.
Similar to Section~\ref{subsec:theory_case_study},
the angles of user paths passing each key-hole have correlation of $\alpha$ in both the $x$ and $y$ axes.
And the angles of paths passing different key-holes are assumed to be independent with $\alpha=1$.
The LOS propagation environment is simulated as a special case of the keyhole channels by letting $L=1$. Unless otherwise specified, each simulation data point has a sample size of $20000$.
We note that the simulation in this section and the analysis in Section~\ref{sec:RayAnalysis} capture the inter-user channel correlation, which is not impacted by the choice of precoding algorithm.
\begin{figure*}[htbp]
\centering
\subfloat[UPA, LOS]{\label{fig:upa_alpha_los}
\includegraphics[width=0.42\textwidth]{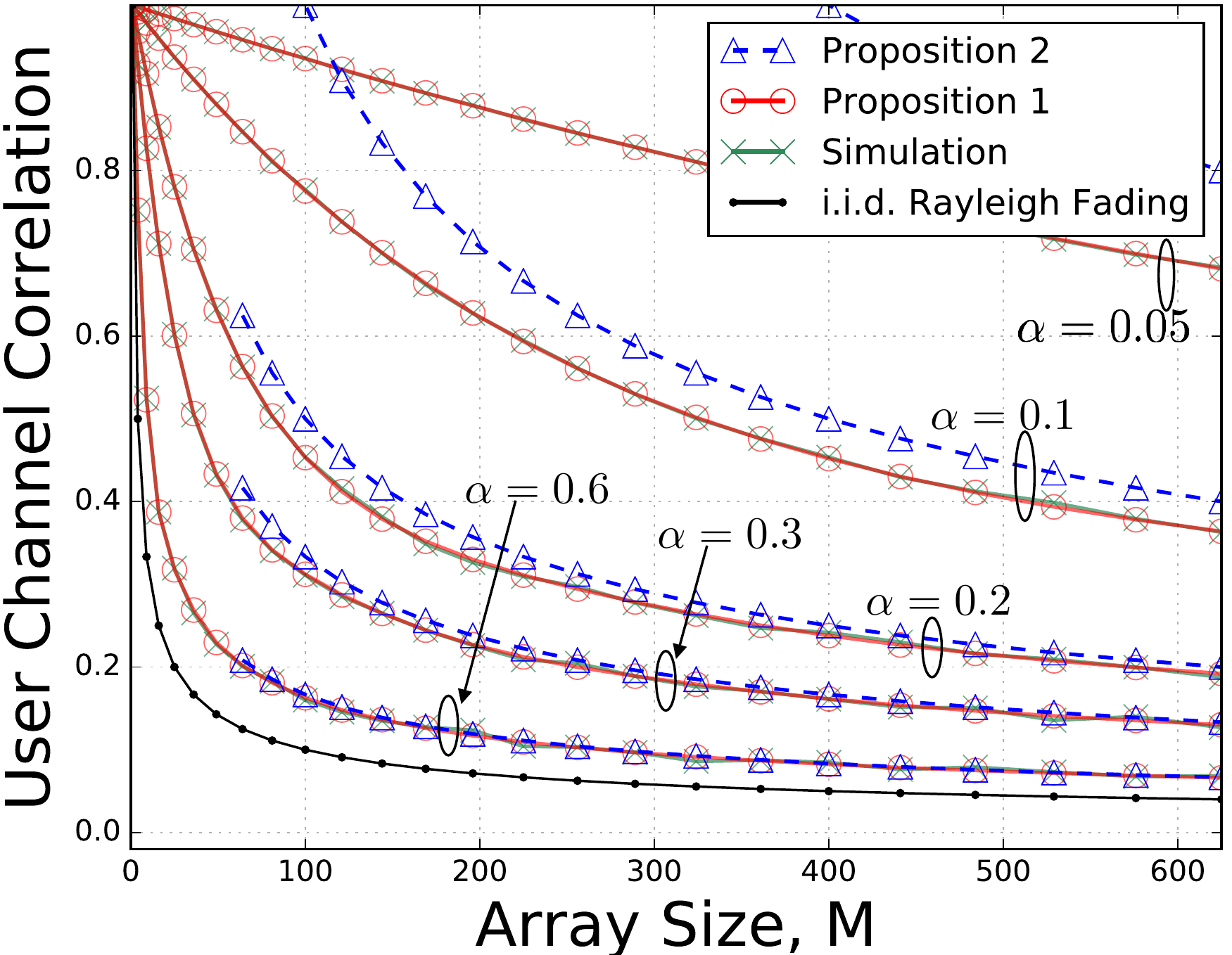}
}
  \subfloat[UPA, NLOS]{ \label{fig:upa_alpha_nlos}
\includegraphics[width=0.42\textwidth]{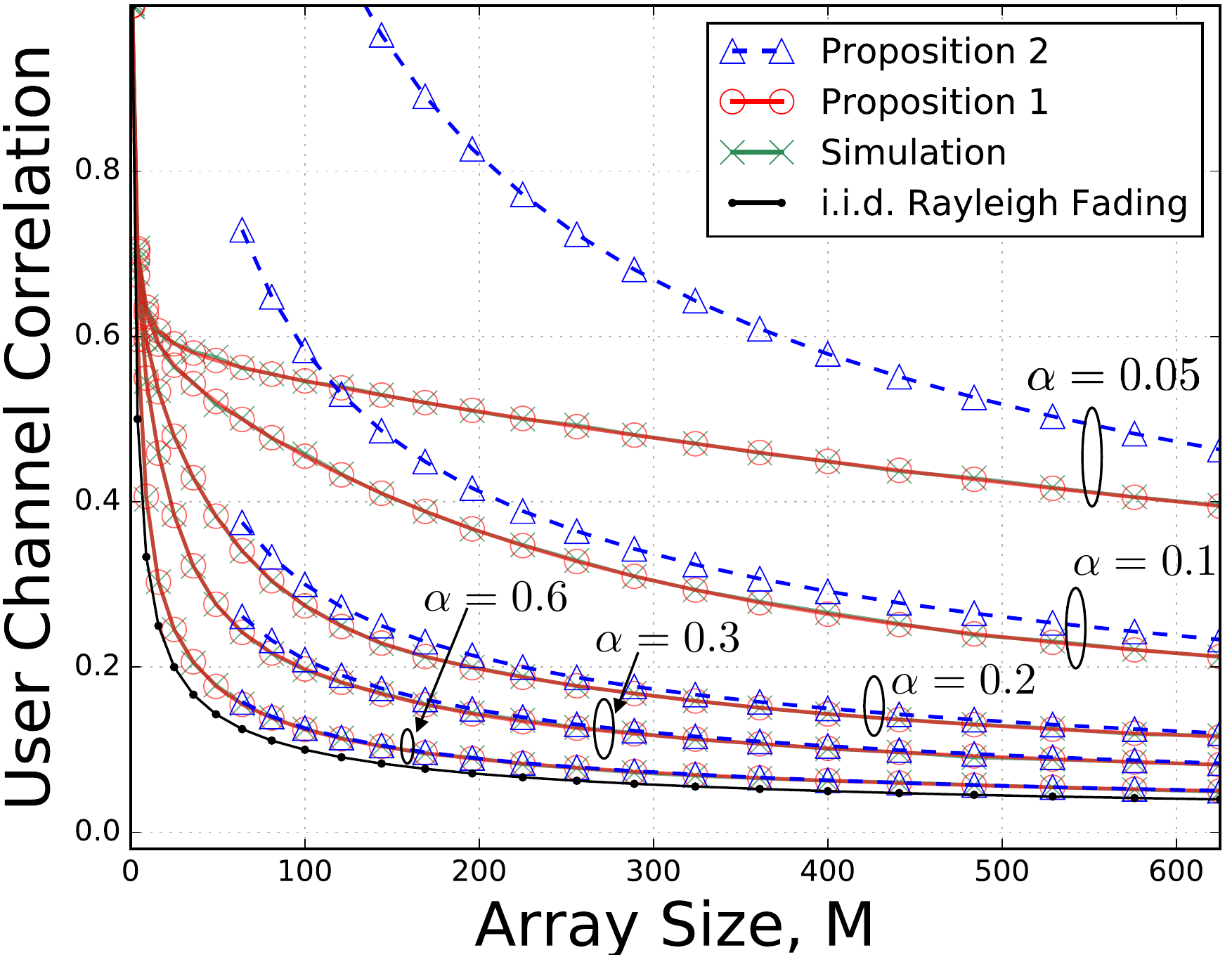}
  }
\caption{
Inter-user channel correlation for massive MIMO with inter-user angle correlation.
The base-station is equipped with a UPA in the LOS and NLOS environments.
Here, the antennas spacing is half wavelength, i.e. $d_x=d_y=0.5 \lambda $.
For each curve group, both $\alpha_x$ and $\alpha_y$ equal the labeled $\alpha$.
For the NLOS channel, $L=3$.
Fig~\ref{fig:upa_alpha} demonstrates that inter-user channel correlation reduces with $L$ and $\alpha$.
In other words, inter-user channel correlation reduces due to multi-path, and increases with inter-user angle correlation.
}
~\label{fig:upa_alpha}
\end{figure*}

Fig.~\ref{fig:upa_alpha} presents the inter-user channel correlation for UPA-based massive MIMO systems in both LOS and NLOS environments.
In Fig.~\ref{fig:upa_alpha_los} and Fig.~\ref{fig:upa_alpha_nlos}, the  inter-user angle correlation varies from $0.05$ to $0.6$.
The average inter-user channel correlation in the i.i.d.\  Rayleigh fading is presented in black.
The inter-user channel correlation reduces with the number of base-station antennas $M$ for all inter-user angle correlations.

In both LOS and NLOS environments, the simulations confirm that the inter-user channel correlation increases with inter-user angle correlation, which directly verifies the OTA Finding $3$ in Section~\ref{subsec:ota_findings}.
Combined with the observation that close-by user angles are more likely~\cite{liu2012cost} to have a high correlation, the simulation results verify the OTA Finding $1$ and Finding $2$.
By comparing the  inter-user channel correlation in Fig.~\ref{fig:upa_alpha_los} and Fig.~\ref{fig:upa_alpha_nlos}, the simulations confirm that the multi-path effect reduces the  inter-user channel correlation, i.e. OTA Finding $5$.

Recall that the i.i.d. Rayleigh fading channel can be viewed as a special case when $L=\infty$ and $\alpha=1$. Fig.~\ref{fig:upa_alpha_nlos} demonstrates that for low  inter-user angle correlation ($\alpha=0.6$) and a few spatial paths ($L=3$), a close to ideal inter-user channel correlation could be obtainable. However, in practical environments where inter-user angle correlation can be high, the inter-user channel correlation might be significant even when $M>600$.

Furthermore, Fig.~\ref{fig:upa_alpha} validates the accuracy of the finite-array and large-array theoretical analysis.
For all simulated $\alpha$ and $M$ in both environments, the finite-array analysis (red) exactly matches the numerical results (green).
And the large-array approximation (blue) in Proposition~\ref{prop:main_asy} becomes accurate when $M>200$ and $\alpha>0.1$.

Recalling that a ULA array is a special version of UPA with $M_y=1$, we now inspect the inter-user channel correlation of massive MIMO with a ULA base-station array.
Fig.~\ref{fig:ula_alpha} presents the inter-user channel correlation in both LOS and NLOS environments with varying levels of inter-user angle correlation.
Similar to the UPA counterparts in Fig.~\ref{fig:upa_alpha}, Fig.~\ref{fig:ula_alpha} shows that inter-user channel correlation increases with the  inter-user angle correlation.
Comparing Fig.~\ref{fig:ula_alpha_los} and Fig.~\ref{fig:ula_alpha_nlos} gives that multi-path effect reduces the inter-user channel correlation.
The overlap of Proposition~\ref{prop:main_finite} analysis (red) and simulated results (green) demonstrates the correctness of the finite-array analysis.
And we find that the large-array approximation in Proposition~\ref{prop:main_asy} becomes accurate for ULA systems when the number of antennas $M>150$.

Additionally, by comparing Fig.~\ref{fig:ula_alpha} to Fig.~\ref{fig:upa_alpha}, we find that ULA based systems demonstrate a lower inter-user channel correlation than the UPA counterparts.
The gap between UPA and ULA systems increases with inter-user angle correlation.
This observation suggests that the design of the base-station array configuration is essential for combating the inter-user angle correlation induced inter-user channel correlation.

\begin{figure*}[htbp]
\centering
\subfloat[ULA, LOS]{\label{fig:ula_alpha_los}
\includegraphics[width=0.42\textwidth]{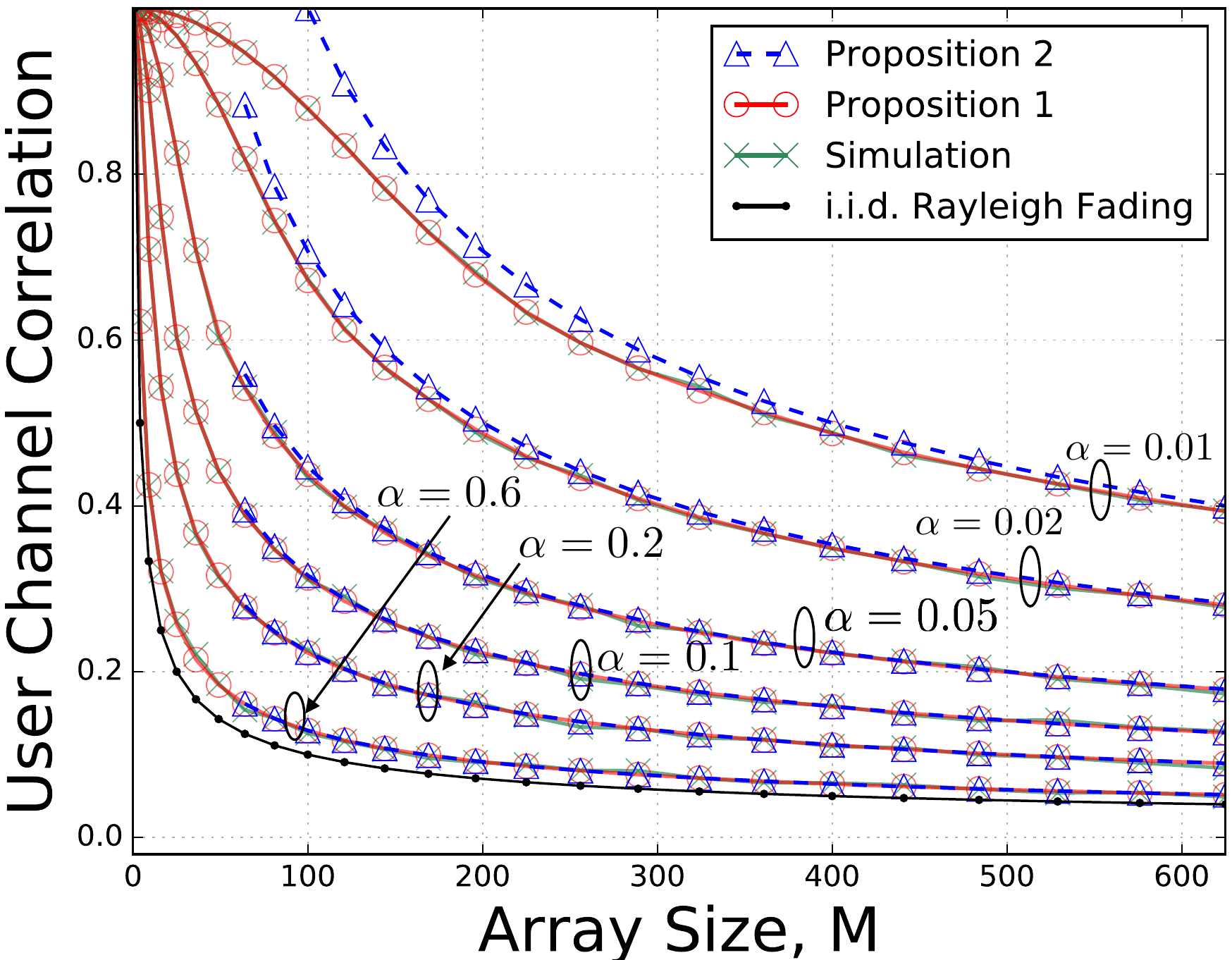}
}
  \subfloat[ULA, NLOS]{ \label{fig:ula_alpha_nlos}
\includegraphics[width=0.42\textwidth]{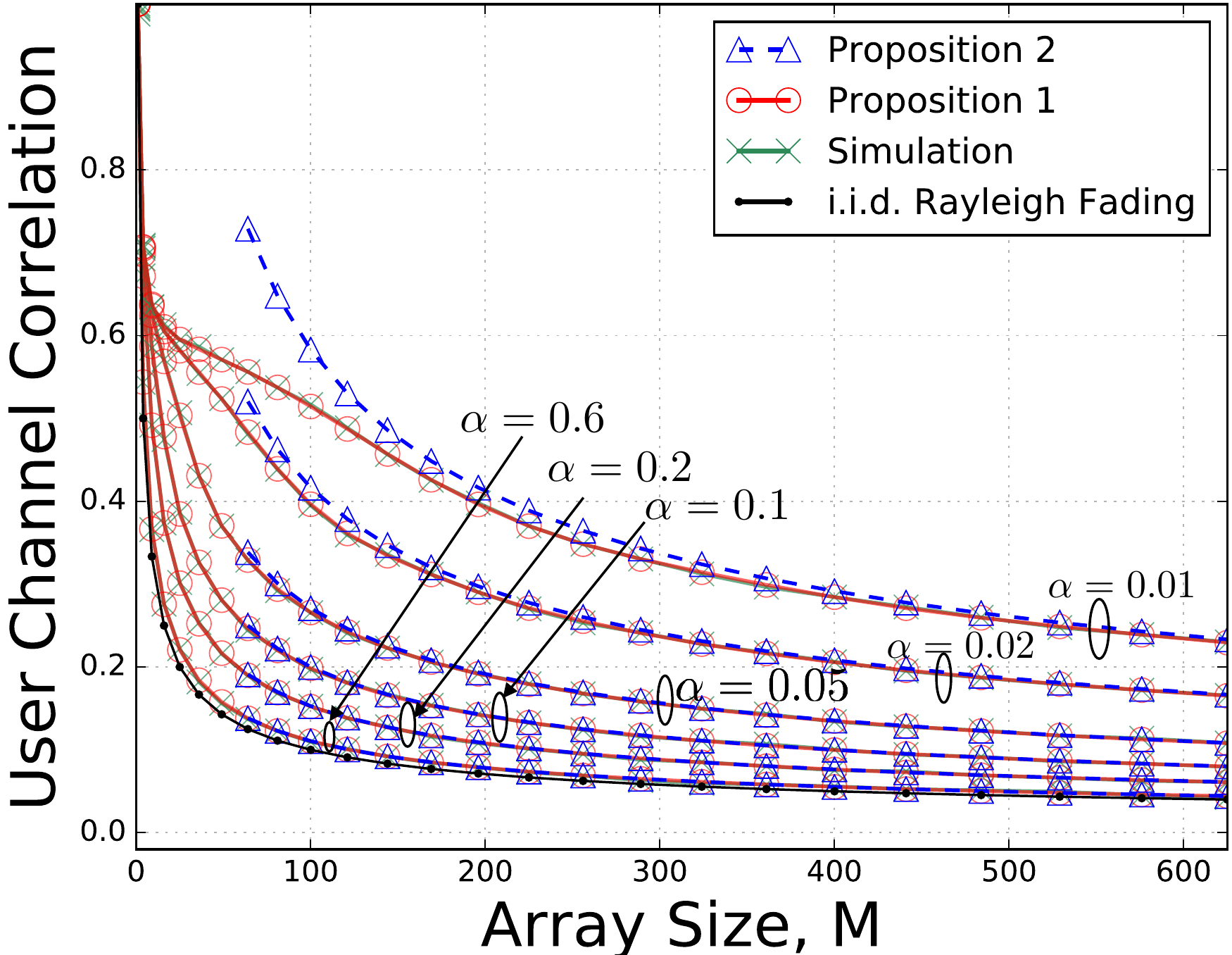}
  }
\caption{
Inter-user channel correlation for massive MIMO with the ULA array.
For both LOS and NLOS cases, the antenna spacing in $x$ and $y$ axes is half the wavelength.
For Fig.~\ref{fig:ula_alpha_nlos}, $L=3$.
Similar to the UPA counterpart in Fig.~\ref{fig:upa_alpha}, this figure demonstrates that multi-path and lower inter-user angle correlation reduces the inter-user channel correlation.
}
~\label{fig:ula_alpha}
\end{figure*}

Propositions~\ref{prop:main_finite} and~\ref{prop:main_asy} predicts that inter-user angle correlation increases the inter-user channel correlation as if
shrinking the base-station inter-antenna spacing while keeping the same $M$.
Fig.~\ref{fig:alpha_vs_delta} validates this prediction by presenting the inter-user channel correlation of users with different inter-user angle correlation and different base-station inter-antenna spacing.
The inter-user channel correlation in both the $x$ and the $y$ axes are $0.1$ or $0.2$.
Fig.~\ref{fig:alpha_vs_delta} considers antenna spacing being $0.25$ wavelength or $0.5$ wavelength.
For both UPA and ULA systems, we find that doubling the antenna spacing (from $0.25$ wavelength to $0.5$ wavelength) can exactly offset the effect of doubling the inter-user angle correlation ($\alpha=0.2$ to $\alpha=0.1$).
In other words, Fig.~\ref{fig:alpha_vs_delta} confirms that manipulating the base-station array spacing is a promising option for managing inter-user angle correlation induced inter-user channel correlation.

\begin{figure*}[htbp]
\centering
\subfloat[UPA, LOS]{\label{fig:alpha_vs_delta_upa}
\includegraphics[width=0.42\textwidth]{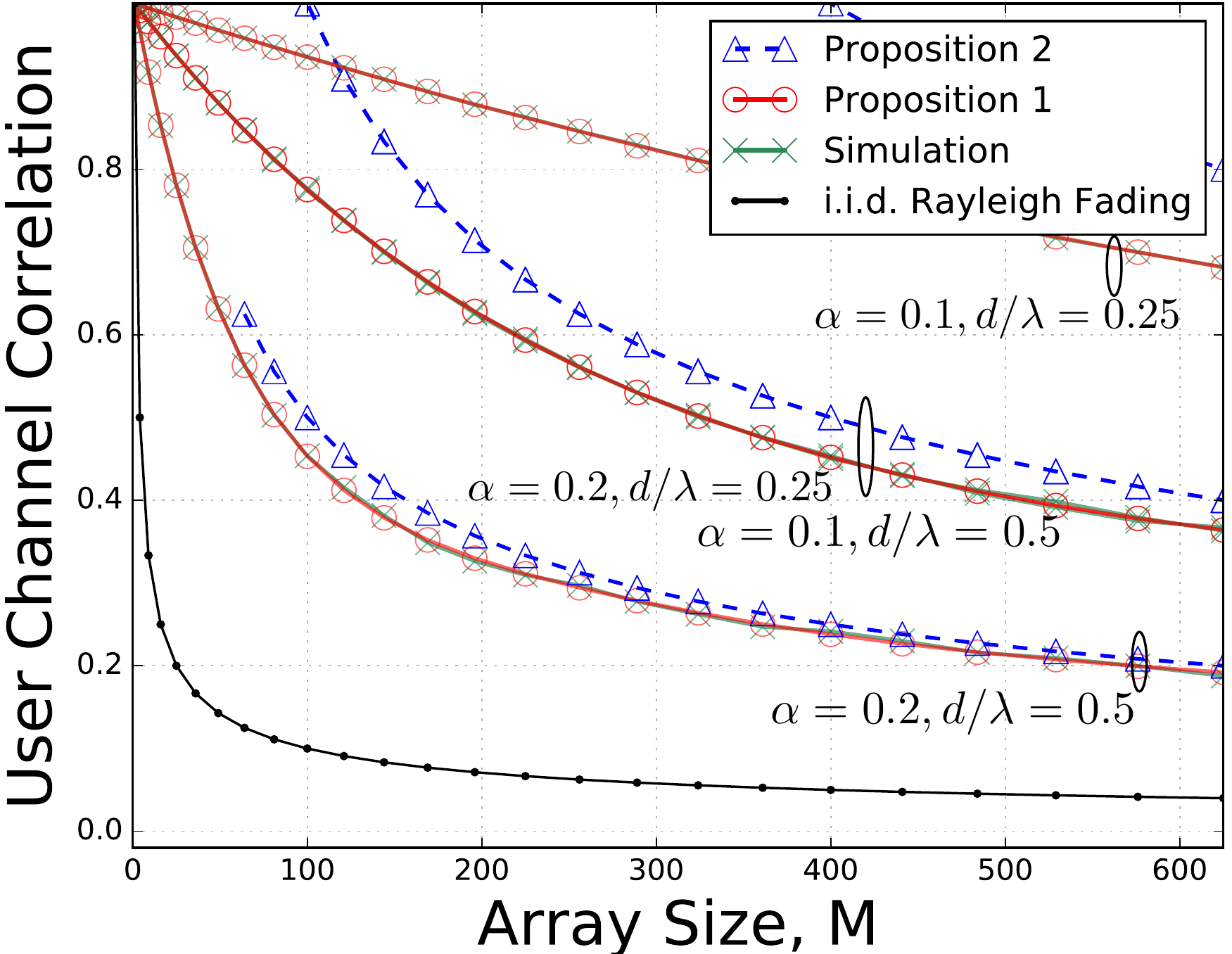}
}
  \subfloat[ULA, LOS]{ \label{fig:alpha_vs_delta_ula}
\includegraphics[width=0.42\textwidth]{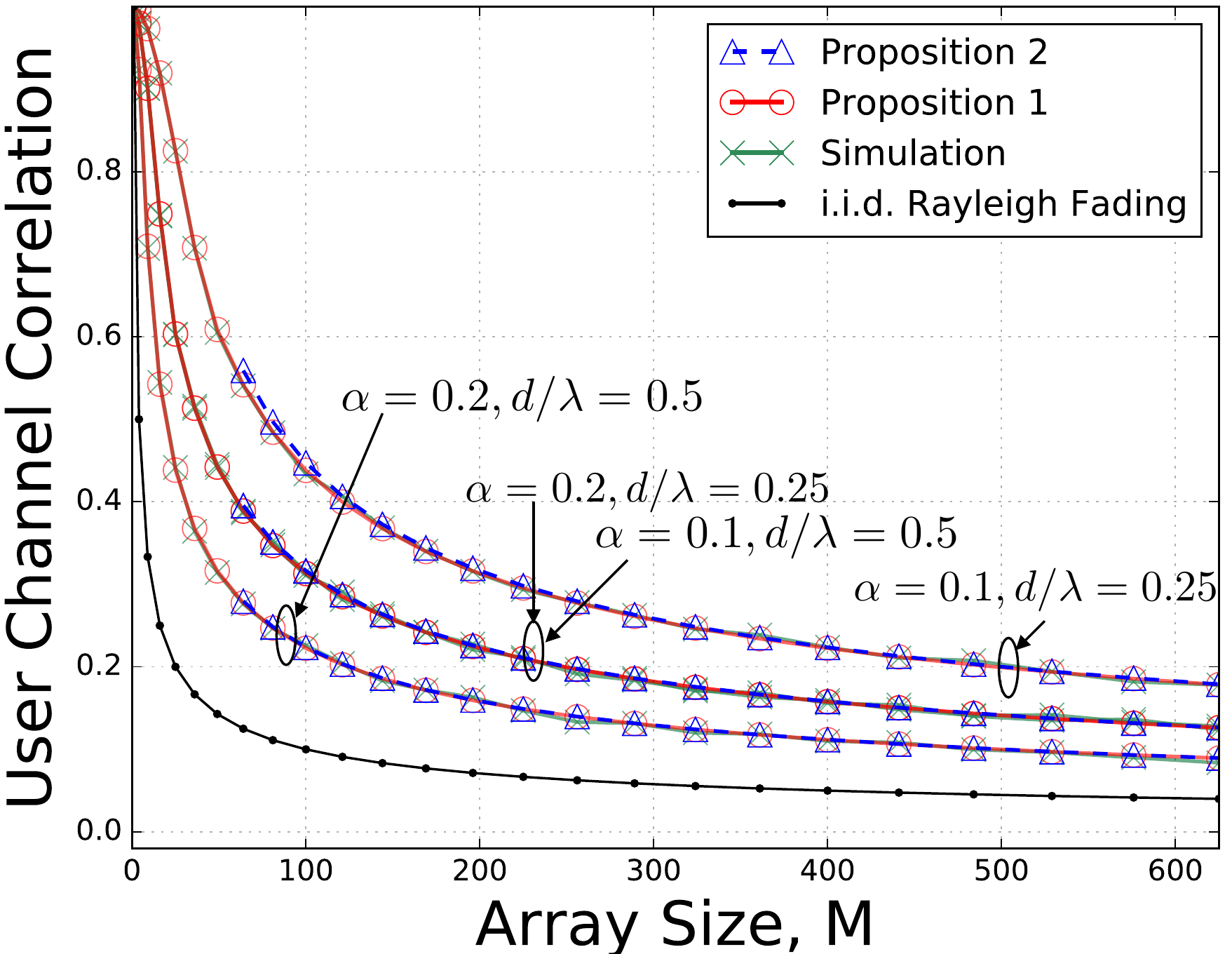}
  }
\caption{
Manipulating the base-station antenna spacing can offset the inter-user channel correlation increase caused by the inter-user angle correlation.
The labeled $d$ denotes the base-station array spacing in the $x$ and $y$ axes in this figure.
The inter-user angle correlation $\alpha_x$ and $\alpha_y$ equal the labeled $\alpha$.
}
~\label{fig:alpha_vs_delta}
\end{figure*}

We next use Fig.~\ref{fig:sub_car_var} to confirm the OTA finding that inter-user channel correlation is near-constant across the subcarriers.
Fig.~\ref{fig:sub_car_var} presents the inter-user channel correlation for UPA-based massive MIMO systems in both LOS and NLOS environments.
During the simulation, we emulate the OTA measurement setup in Section~\ref{sec:measurement} by using the same channel and base-station array setup.
The simulated channel has a central frequency of $2.437$ GHz with a $20$ MHz bandwidth.
There are $52$ data-carrying subcarriers.
The base-station adopts a UPA array with antenna spacing of $0.0635$ meters.
Fig.~\ref{fig:sub_car_var_los} and Fig.~\ref{fig:sub_car_var_nlos} presents the LOS and NLOS results, prospective.
The small error bars of the simulation curves (in green) confirm the OTA Finding $4$, and theoretical analysis in that inter-user channel correlation is near-constant across the subcarriers.
Finally, similar to the earlier simulation results, Fig.~\ref{fig:sub_car_var} also confirms the accuracy of the finite-array (red) and large-array (blue) theoretical analysis.

\begin{figure*}[htbp]
\centering
  \subfloat[UPA, LOS]{ \label{fig:sub_car_var_los}
\includegraphics[width=0.42\textwidth]{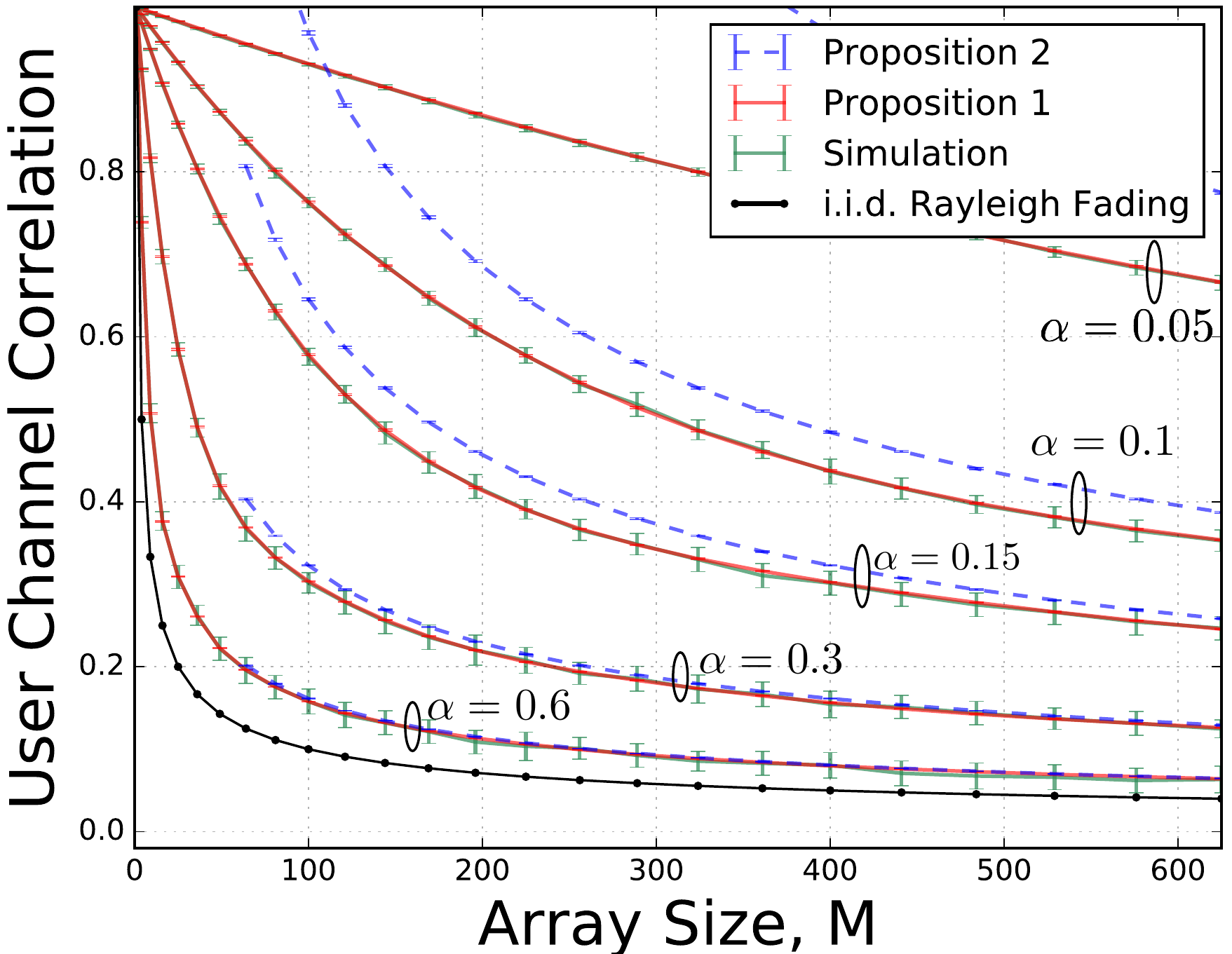}
}
\subfloat[UPA, NLOS]{\label{fig:sub_car_var_nlos}
\includegraphics[width=0.42\textwidth]{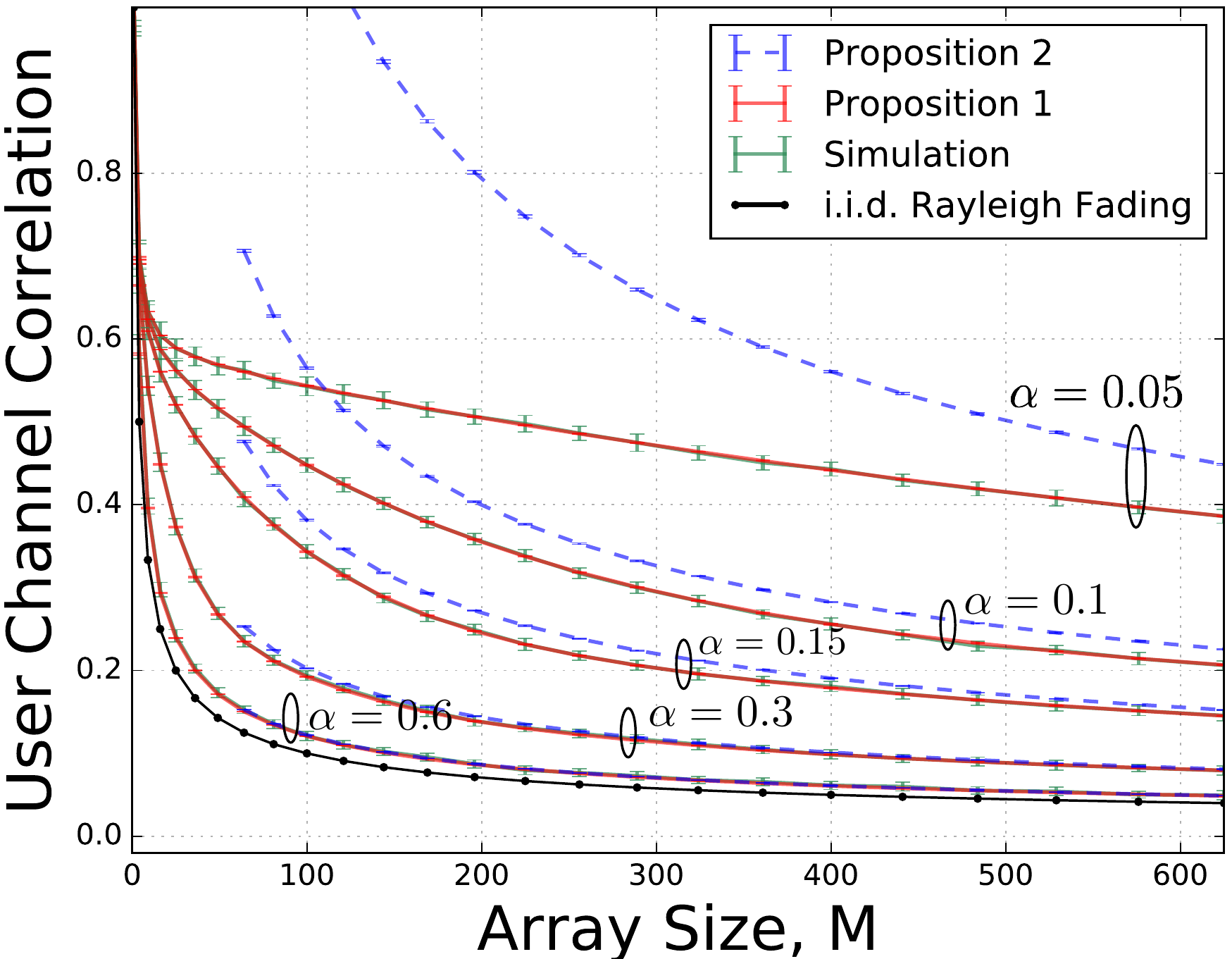}
}
\caption{
Inter-user channel correlation is near-constant across the subcarriers.
The error bar of each curve denotes the inter-user channel correlation among the 52 subcarriers across the  $20$ MHz bandwidth at the $2.4$ GHz.
The error bars for all simulation (green) curves confirm that the user channel standard deviation is less than $0.01$, which matches with OTA Finding $5$.
In Fig.~\ref{fig:sub_car_var_nlos}, $L=3$.
The base-station is equipped with a UPA and the labeled $\alpha$ represents both $\alpha_x$ and $\alpha_y$
The base-station antenna spacing satisfies that $d_{y}=0.0635$ m and $d_{y}=0.0635$ m.
For each curve, we simulate each subcarrier with $500$ samples.
}
~\label{fig:sub_car_var}
\end{figure*}

\section{Conclusion}\label{sec:Conclude}

This work studies the inter-user channel correlation in practical massive MIMO channels with inter-user angle correlation.
We measured and released a new massive MIMO channel dataset with over $11500$ unique channel vectors.
We use measured channels to examine the inter-user channel correlation in a real-world propagation environment.
The measurements show that adding more base-station antennas barely reduces the channel correlation between close-by users when the number of base-station antennas $M>36$.
Moreover, more than $30.56\%$ far-away user pairs have channel correlation twice higher than the correlation in the i.i.d. Rayleigh fading channel.
Via spatial signal processing, we find that users with high channel correlation can have proximity in the angle space.
We adopt spatial channel models to compute the inter-user channel correlation in closed-form with the impact of inter-user angle correlation.
The analysis proves that the inter-user channel correlation increases with inter-user angle correlation, and reduces with base-station array aperture.
With fixed base-station inter-antenna spacing, except for the corner cases where users always share at least one angle, the inter-user channel correlation converges to zero as $M$ increases to infinity.
However, the inter-user angle correlation and limited base-station form factor can result in high inter-user channel correlation in real-world systems.
We further validate the analysis with numerical experiments.
The presented results collectively demonstrate that inter-user angle correlation management, such as user selection, could be critical in practical environments.
Finally, base-station array configuration design is an attractive option for reducing inter-user channel correlation caused by inter-user angle correlation.

\begin{appendices}
\renewcommand{\thesectiondis}[2]{\Alph{section}:}
\section{Proof of User Channel Correlation, Proposition~\ref{prop:main_finite}}\label{appendix:main_finite}
We begin the proof by decomposing the inter-user channel product $ \mathbf{h}_{k}^{H}\mathbf{h}_{k^{'}} $ as
\begin{align}
&\left(\sum_{l=1}^{L}\beta_{k,l} e^{j\phi_{k,l}} \mathbf{a}\Big(\bm{\theta}_{k,l}\Big)\right)^{H}
                                   \left(\sum_{l=1}^{L} \beta_{k^{'},l^{'}}  e^{j\phi_{k^{'},l^{'}}} \mathbf{a}\left(\bm{\theta_{k^{'},l^{'}}}\right)\right)\notag \\
=& \sum_{l=1}^{L} \sum_{l^{'}=1}^{L} \beta_{k,l} \beta_{k^{'},l^{'}}  e^{j\left(\phi_{k^{'},l^{'}}-\phi_{k,l}\right)}  \mathbf{a}^{H}\Big(\bm{\theta_{k,l}}\Big)          \mathbf{a}\Big(\bm{\theta_{k^{'},l^{'}}}\Big) \notag
\end{align}
Under channel model~\eqref{equ:ray_mdl}, path phases $\phi_{k^{'},l^{'}}, \phi_{k,l}$ follow independent uniform distribution on $\left[0, 2\pi\right]$.
For any random variable $X$ that is independent of $\phi$, we have $\Exp \left[e^{j \phi} X\right]= 0$.
The linearity of the expectation operator then gives
\begin{equation}
\Exp \left[\frac{1}{M}\mathbf{h}_{k}^{H}\mathbf{h}_{k^{'}} \right] =  0, \label{equ:main_finite_mean}
\end{equation}
The variance~\eqref{equ:main_finite_var} can be then proved via some standard manipulations as
\begin{align}
\Var \left[\mathbf{h}_{k}^{H}\mathbf{h}_{k^{'}}\right] =& \Exp \left[\left|\mathbf{h}_{k}^{H}\mathbf{h}_{k^{'}}\right|^2\right] - \left|\Exp \left[\mathbf{h}_{k}^{H}\mathbf{h}_{k^{'}}\right]\right|^2 \notag \\
= & \Exp \left[\left|\mathbf{h}_{k}^{H}\mathbf{h}_{k^{'}}\right|^2\right], \notag
\end{align}
where the last step is via the proved mean~\eqref{equ:main_finite_mean}. Reusing the i.i.d. uniform path phase assumption, the variance term
$\Exp \left[\sum_{l=1}^{L} \sum_{l^{'}=1}^{L} \left(\beta_{k,l} \beta_{k^{'},l^{'}}\right)^2 \left| \mathbf{a}^{H}\Big(\bm{\theta_{k,l}}\Big) \mathbf{a}\Big(\bm{\theta_{k^{'},l^{'}}}\Big)\right|^2\right]$ equals
\begin{equation}
\sum_{l=1}^{L} \sum_{l^{'}=1}^{L} \left(\beta_{k,l} \beta_{k^{'},l^{'}}\right)^2 \Exp \left[  \left| \mathbf{a}^{H}\Big(\bm{\theta_{k,l}}\Big)          \mathbf{a}\Big(\bm{\theta_{k^{'},l^{'}}}\Big)\right|^2 \right]
\end{equation}
We complete this proof by computing each of the expectation terms $\Exp \left[ \left|
         \mathbf{a}^{H}\Big(\bm{\theta_{k,l}}\Big)
         \mathbf{a}\Big(\bm{\theta_{k^{'},l^{'}}}\Big)
         \right|^2 \right]$ as
\begin{align}
&\Var \left[\mathbf{a}^{H}\Big(\bm{\theta_{k,l}}\Big)
              \mathbf{a}\Big(\bm{\theta_{k^{'},l^{'}}}\Big)\right]
   + \Exp^{2} \left[
            \mathbf{a}^{H}\Big(\bm{\theta_{k,l}}\Big)
            \mathbf{a}\Big(\bm{\theta_{k^{'},l^{'}}}\Big)
         \right] \notag  \\
=& \eta\left(\frac{d_{x}\alpha_{x}}{\lambda} , M_{x}\right)  \eta \left(\frac{d_{y}\alpha_{y}}{\lambda}, M_{y}\right),
\end{align}
where the last step is by Lemma~\ref{appendix_prop:aa_correlation}.

\section{Asymptotic User Channel Correlation}\label{appendix:asy_proof}
Due to the symmetry between the $x$ axis and the $y$ axis, it is sufficient to prove that
\begin{equation}
\eta\left(c , M\right) \cong \tilde{\eta}\left(c, M\right)=\bm{1}_{c = 0} + \bm{1}_{c \neq 0}  \frac{1}{2 c M}. \label{equ:proof_asy}
\end{equation}
We prove by consider different values of $c$. When $c=0$, we use the definition of $\eta\left(\cdot\right)$  and have
\begin{equation}
\eta\left(0 , M\right) =
\frac{1}{M^2} \sum_{m_{1}=0}^{M-1}\sum_{m_{2}=0}^{M-1} \sinc\left( 0\right)= 1,\label{equ:proof_asy_zero}
\end{equation}
where the last step is by $\sinc\left(0\right)=1$.
When $c>0$, $\eta\left(c, M\right)$ can be computed with Lemma~\ref{append:lemma_2d_sinc_sum} as
\begin{align}
\lim_{M\to\infty} 2c M\eta\left(c , M\right)
= &  \lim_{M\to\infty} \frac{2c}{M} \sum_{m_{1}=0}^{M-1}\sum_{m_{2}=0}^{M-1}
                \sinc\left( 2\pi c \left(m_{1}-m_{2}\right)\right)  \notag \\
=& 1, \quad \text{for} \quad c \in \left(0, 1\right].  \label{equ:proof_asy_positive}
\end{align}
Combining~\eqref{equ:proof_asy_zero} and~\eqref{equ:proof_asy_positive} gives~\eqref{equ:proof_asy}, which completes the proof.

\section{Useful Properties Of Spatial Correlated Paths}
\begin{lemma}~\label{appendix_prop:aa_correlation}
For two array response vectors with angles that satisfy $\sin \theta_{x,1}  -\sin \theta_{x,2} \sim U\left[-\alpha_{x}, \alpha_{x}\right]$, $\sin\theta_{y,1} -\sin \sin\theta_{y,2} \sim U\left[-\alpha_{y},\alpha_{y}\right]$ and is independent of $\sin \theta_{x,1}  -\sin \theta_{x,2}$,
and $\alpha_{x}, \alpha_{y} \in [0, 1]$, the mean and variance of their response vector follows
\begin{align}
\Exp \left[\mathbf{a}^{H}\left(\bm{\theta_{1}}\right) \mathbf{a}\left(\bm{\theta_{2}}\right) \right] = &
\mu\left(\frac{\alpha_{x} d_{x}}{\lambda}, M_{x}\right)
\mu\left(\frac{\alpha_{y}d_{y}}{\lambda}, M_{y}\right), \label{prop_equ:aa_mean}\\
\Var \left[\mathbf{a}^{H}\left(\bm{\theta_{1}}\right) \mathbf{a}\left(\bm{\theta_{2}}\right)\right] =&
\eta\left(\frac{\alpha_{x} d_{x}}{\lambda} , M_{x}\right)  \eta \left(\frac{\alpha_{y} d_{y}}{\lambda}, M_{y}\right)
-  \notag \\
& \left[
\mu\left(\frac{\alpha_{x} d_{x}}{\lambda}, M_{x}\right)
\mu\left(\frac{\alpha_{y}d_{y}}{\lambda}, M_{y}\right)
\right]^2, \label{equ:append_c_var}
\end{align}
where $\mu\left(c, M\right) = \sum_{m=0}^{M-1} \sinc\left( 2\pi m c \right)$,
and $\eta \left(c, M\right)= \sum_{m_{1}=0}^{M-1}\sum_{m_{2}=0}^{M-1} \sinc\left( 2\pi c \left(m_{1}-m_{2} \right) \right).$
\end{lemma}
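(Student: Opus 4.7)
The plan is to exploit the tensor-product structure of the UPA steering vector in~\eqref{equ:def_a_theta_UPA} to decouple the $x$- and $y$-axis contributions, reducing both identities to a one-dimensional characteristic-function calculation for a single uniform random variable, carried out independently on each axis and then multiplied.

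First I would introduce $\Delta_x := \sin\theta_{x,2}-\sin\theta_{x,1} \sim U[-\alpha_x,\alpha_x]$ and $\Delta_y$ analogously. With this notation, $\mathbf{a}^H(\bm{\theta}_1)\mathbf{a}(\bm{\theta}_2)$ factors as a product of a sum over $m_x=0,\ldots,M_x-1$ of complex exponentials depending only on $\Delta_x$ and a sum over $m_y=0,\ldots,M_y-1$ depending only on $\Delta_y$. Because $\Delta_x$ and $\Delta_y$ are independent by hypothesis, the expectation of any product of functions of the two axis factors itself factors across axes, so it suffices to carry out each expectation on a single axis and multiply at the end.

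For the mean~\eqref{prop_equ:aa_mean} I would interchange $\Exp$ with the sum over $m_x$ and integrate each complex exponential $e^{j 2\pi d_x m_x \Delta_x/\lambda}$ against the uniform density on $[-\alpha_x,\alpha_x]$; a one-line calculation returns $\sinc(2\pi m_x d_x \alpha_x/\lambda)$, and summing in $m_x$ produces $\mu(d_x\alpha_x/\lambda,M_x)$. The analogous $y$-axis factor finishes~\eqref{prop_equ:aa_mean}. For the variance~\eqref{equ:append_c_var} I would first compute $\Exp[|\mathbf{a}^H(\bm{\theta}_1)\mathbf{a}(\bm{\theta}_2)|^2]$ by expanding the modulus squared of each axis factor as a double sum indexed by $(m_1,m_2)$; the resulting exponent depends on $\Delta_x$ only through $m_1-m_2$, so the same uniform-integration identity applied index-wise yields exactly the double sinc sum that is $\eta(d_x\alpha_x/\lambda,M_x)$. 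Multiplying across axes and subtracting $|\Exp[\mathbf{a}^H\mathbf{a}]|^2=[\mu\cdot\mu]^2$ gives~\eqref{equ:append_c_var}.

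The calculation is essentially elementary, and I do not anticipate a substantive technical obstacle: the content of the lemma is really just the two observations that the UPA steering-vector inner product decouples across the $x$ and $y$ axes and that the characteristic function of a zero-mean uniform random variable is a $\sinc$. The only care needed is bookkeeping — the lemma uses an unnormalized $\eta$ (no $1/M^2$ prefactor) while Proposition~\ref{prop:main_finite} uses the normalized version with the $1/M^2$ absorbed, so the two conventions must not be conflated when this lemma is invoked in the proofs in Appendix~\ref{appendix:main_finite} and Appendix~\ref{appendix:asy_proof}.
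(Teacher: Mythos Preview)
Your proposal is correct and follows essentially the same route as the paper: define the axis-wise differences $u_x,u_y$ (your $\Delta_x,\Delta_y$), exploit the separable UPA steering-vector structure together with the independence of $u_x$ and $u_y$ to factor across axes, integrate each complex exponential against the uniform density to obtain the $\sinc$ terms, and then for the variance expand $|\mathbf{a}^H\mathbf{a}|^2$ into the double sum and subtract the squared mean. Your additional remark about the normalized versus unnormalized $\eta$ conventions is a useful bookkeeping note that the paper does not spell out.
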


\begin{proof}
 \begin{figure*}
\centering
\begin{minipage}{1.0\textwidth}
\begin{align}
\Exp \left[\mathbf{a}^{H}\left(\bm{\theta_{1}}\right) \mathbf{a}\left(\bm{\theta}_{2}\right) \right] =&
\sum_{m_{x}=0}^{M_{x}-1}\sum_{m_{y}=0}^{M_{y}-1} \Exp \left[
e^{-j  2\pi \left(m_{x} \sin \theta_{x,1} \frac{d_{x}}{\lambda} + m_{y} \sin \theta_{y,1} \frac{d_{y}}{\lambda}\right)}
e^{j 2\pi \left(m_{x} \sin \theta_{x,2} \frac{d_{x}}{\lambda} + m_{y} \sin \theta_{y,2} \frac{d_{y}}{\lambda}\right)}
\right]  \notag\\
=& \sum_{m_{x}=0}^{M_{x}-1}\sum_{m_{y}=0}^{M_{y}-1} \int_{-\alpha_{y}}^{\alpha_{y}} \frac{1}{2\alpha_{y}}
\int_{-\alpha_{x}}^{\alpha_{x}} \frac{1}{2 \alpha_{x}}
e^{j 2\pi \left(m_{x} u_{x} \frac{d_{x}}{\lambda} + m_{y} u_{y} \frac{d_{y}}{\lambda}\right)}
 d u_{x} d u_{y} \notag\\
=&  \sum_{m_{x}=0}^{M_{x}-1}\sum_{m_{y}=0}^{M_{y}-1} \sinc\left( 2\pi m_{x} \frac{\alpha_{x} d_{x}}{\lambda} \right)
\int_{-\alpha_{y}}^{\alpha_{y}} \frac{1}{2\alpha_{y}}
e^{j 2\pi  m_{y} u_{y} \frac{d_{y}}{\lambda}}   d u_{y}  \notag \\
=& \sum_{m_{x}=0}^{M_{x}-1}  \sinc\left( 2\pi m_{x} \frac{\alpha_{x} d_{x}}{\lambda} \right)
    \sum_{m_{y}=0}^{M_{y}-1} \sinc\left( 2\pi m_{y}  \frac{\alpha_{y}d_{y}}{\lambda} \right). \label{equ:temp_aa_sinc_proof}
\end{align}
\hrule
\end{minipage}
\end{figure*}

Define $u_{x}  = \sin(\theta_{x,2}) - \sin(\theta_{x,1})$, and $u_{y}  = \sin(\theta_{y,2}) - \sin(\theta_{y,1})$.
We compute the first moment~\eqref{prop_equ:aa_mean} in~\eqref{equ:temp_aa_sinc_proof} by using the definition of the UPA array response vector~\eqref{equ:def_a_theta_UPA}.

Similarly, we compute the second moment~\eqref{equ:append_c_var} by~\eqref{equ:def_a_theta_UPA} as

\begin{align}
\Exp\left[\left|\mathbf{a}^{H}\left(\bm{\theta_{1}}\right) \mathbf{a}\left(\bm{\theta_{2}}\right)\right|^2\right] - \left|\Exp\left[\mathbf{a}^{H}\left(\bm{\theta_{1}}\right) \mathbf{a}\left(\bm{\theta_{2}}\right)\right]\right|^2.\label{equ:nov_1_temp2}
\end{align}
Recall the first moment~\eqref{prop_equ:aa_mean} is proved by~\eqref{equ:temp_aa_sinc_proof}, we complete the proof by computing $\Exp\left[\left|\mathbf{a}^{H}\left(\bm{\theta_{1}}\right) \mathbf{a}\left(\bm{\theta_{2}}\right)\right|^2\right]$, which equals
\begin{equation}
\Exp\left[\left|
\sum_{m_{x}=0}^{M_{x}-1}\sum_{m_{y}=0}^{M_{y}-1}
e^{
j  2\pi
\left(
\left(m_{x_{1}} -m_{x_{2}} \right) u_{x} \frac{d_{x}}{\lambda} +
\left(m_{y_{1}} -m_{y_{2}} \right) u_{y} \frac{d_{y}}{\lambda}
\right)
}
\right|^2
\right],\notag
\end{equation}
which is simplified by using independence of $u_{x}$ and $u_{y}$ as
\begin{align}
\Exp\Bigg[ \sum_{m_{x_{1}}=0}^{M_{x}-1} &\sum_{m_{x_{2}}=0}^{M_{x}-1}
e^{j  2\pi  \left(m_{x_1} - m_{x_2}\right) u_{x} \frac{d_{x}}{\lambda} }  \times \notag
\\
& \sum_{m_{y_{1}}=0}^{M_{y}-1}\sum_{m_{y_{2}}=0}^{M_{y}-1}
e^{j  2\pi \left(m_{y_1} - m_{y_2}\right) u_{y} \frac{d_{y}}{\lambda}}
\Bigg]
\label{equ:nov_1_temp}
\end{align}
By the linearity of expectation operator and following steps similar to~\eqref{equ:temp_aa_sinc_proof}, the proof is completed by~\eqref{equ:nov_1_temp} and~\eqref{equ:nov_1_temp2}.
\end{proof}

\section{Some Useful Lemmas}~\label{append:user_cor}
\begin{lemma} ~\label{append:lemma_sinc_sum}
  For any $0 < \gamma \leq \pi$, we have
  \begin{equation}
    \sum_{m=0}^{\infty} \sinc\left(\gamma m\right) = \sum_{m=0}^{\infty} \frac{\sin\left(\gamma m\right)}{\gamma m} = \frac{\pi + \gamma}{2 \gamma}.
  \end{equation}
\end{lemma}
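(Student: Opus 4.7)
The plan is to reduce the claim to the classical Fourier series identity for the sawtooth wave, namely $\sum_{m=1}^{\infty} \frac{\sin(\gamma m)}{m} = \frac{\pi - \gamma}{2}$ for $\gamma \in (0, 2\pi)$, and then patch in the $m=0$ term by hand. Concretely, I would first split off the $m=0$ contribution using the convention $\sinc(0)=1$, so that
\begin{equation}
\sum_{m=0}^{\infty} \sinc(\gamma m) = 1 + \frac{1}{\gamma}\sum_{m=1}^{\infty} \frac{\sin(\gamma m)}{m}. \notag
\end{equation}
Given the classical identity, the right-hand side becomes $1 + \frac{1}{\gamma}\cdot \frac{\pi - \gamma}{2} = \frac{\pi + \gamma}{2\gamma}$, which is exactly the target.

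To justify the classical identity, I would take the imaginary part of the standard power series $-\log(1-z) = \sum_{m=1}^{\infty} z^m/m$ evaluated at $z = e^{i\gamma}$ on the unit circle. Writing $1 - e^{i\gamma} = -2i \sin(\gamma/2)\, e^{i\gamma/2}$ gives $-\log(1-e^{i\gamma}) = -\log(2\sin(\gamma/2)) + i(\pi-\gamma)/2$, so that the imaginary part yields $\sum_{m=1}^{\infty} \sin(\gamma m)/m = (\pi - \gamma)/2$ for $\gamma \in (0, 2\pi)$, which covers the range $0 < \gamma \leq \pi$ in the statement. Alternatively, I would appeal directly to the textbook Fourier series of the sawtooth function on $(0, 2\pi)$.

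The main obstacle is a convergence issue: the series $\sum_{m \geq 1} \sin(\gamma m)/m$ converges only conditionally, not absolutely, so one has to be careful that the rearrangement implicit in the complex-logarithm argument is legitimate. I would handle this by invoking Abel's theorem: the power series converges for $|z|<1$, its coefficients satisfy Dirichlet's test on the boundary for any fixed $\gamma \in (0, 2\pi)$, and Abel's theorem then guarantees the boundary value agrees with the radial limit. Once this convergence is pinned down, the remaining algebra that collapses $1 + (\pi - \gamma)/(2\gamma)$ to $(\pi + \gamma)/(2\gamma)$ is a one-line simplification and completes the proof.
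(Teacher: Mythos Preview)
Your proof is correct, but it takes a different route from the paper. The paper invokes the Poisson summation formula, observing that the rectangular window's DTFT is the $\sinc$ function, so that summing $\sinc(\gamma m)$ over the integers becomes a sum of shifted rectangular windows sampled at zero, from which the closed form $(\pi+\gamma)/(2\gamma)$ drops out directly. You instead split off the $m=0$ term and reduce to the classical sawtooth identity $\sum_{m\geq 1}\sin(\gamma m)/m = (\pi-\gamma)/2$, justified via the boundary behavior of $-\log(1-z)$ together with Abel's theorem and Dirichlet's test for conditional convergence. Both arguments are standard and equally rigorous; the Poisson route is more in keeping with the signal-processing flavor of the surrounding appendix and generalizes cleanly to other Fourier-transform pairs, while your argument is more elementary and fully self-contained, needing only a first-course complex-analysis fact rather than the Poisson machinery.
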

\begin{proof}
The proof is immediate via using the Poisson summation formula. The key step is to observe that the rectangular window function's Discrete-Time Fourier Transform (DTFT) is the $\sinc$ function. Reference~\cite[5.4]{prudnikov1986integrals} provides detailed proof.
\end{proof}

\begin{lemma} ~\label{append:lemma_2d_sinc_sum}
  For any positive real number $\gamma > 0$, we have
  \begin{equation}
  \lim_{M \to \infty} \frac{1}{M} \sum_{m_{1}=0}^{M} \sum_{m_{2}=0}^{M} \sinc\left(\gamma \left(m_{1} - m_{2}\right)\right)
  = \frac{\pi}{\gamma}. \label{equ:append_a_2}
  \end{equation}
\end{lemma}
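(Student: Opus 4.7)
The plan is to reduce the double sum to a one-dimensional weighted sum by reindexing via the difference $k = m_1 - m_2$. For each $k \in \{-M, \ldots, M\}$, there are exactly $M+1-|k|$ pairs $(m_1, m_2) \in \{0, 1, \ldots, M\}^2$ with $m_1 - m_2 = k$. Using the evenness of $\sinc$, the double sum collapses to
\begin{equation*}
\sum_{m_1 = 0}^{M}\sum_{m_2 = 0}^{M}\sinc(\gamma(m_1 - m_2)) = (M+1) + 2\sum_{k=1}^{M}(M+1-k)\sinc(\gamma k).
\end{equation*}
After dividing by $M$ and distributing the weight $(M+1-k)$, I obtain three contributions that I analyze independently: a boundary term $(M+1)/M$, a ``flat'' term $\tfrac{2(M+1)}{M}\sum_{k=1}^{M}\sinc(\gamma k)$, and a ``linear'' correction $-\tfrac{2}{M}\sum_{k=1}^{M}k\,\sinc(\gamma k)$.

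The boundary term converges to $1$ trivially. For the flat term, I would invoke Lemma~\ref{append:lemma_sinc_sum} after stripping the $k=0$ contribution: $\sum_{k=1}^{\infty}\sinc(\gamma k) = (\pi+\gamma)/(2\gamma) - 1 = (\pi-\gamma)/(2\gamma)$, which (together with $(M+1)/M \to 1$) already accounts for a contribution of $1 + (\pi-\gamma)/\gamma = \pi/\gamma$. So the entire proof comes down to showing that the linear correction vanishes.

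The hard part is precisely this linear correction, since its weights $k$ are unbounded and a naive bound on $\sinc(\gamma k)$ only gives $O(\log M)/M$ or worse. The rescue is the elementary identity $k\sinc(\gamma k) = \sin(\gamma k)/\gamma$, which converts the problematic weighted sum into a pure Dirichlet-type partial sum $\tfrac{2}{\gamma M}\sum_{k=1}^{M}\sin(\gamma k)$. A closed-form evaluation of this geometric-type sine sum (e.g.\ via $\sum_{k=1}^{M}\sin(\gamma k) = \sin(M\gamma/2)\sin((M+1)\gamma/2)/\sin(\gamma/2)$) yields a bound of $1/|\sin(\gamma/2)|$ that is uniform in $M$, valid whenever $\gamma$ is not an integer multiple of $2\pi$. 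Dividing by $M$ therefore sends this term to $0$.

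Combining the three contributions gives the claimed limit $\pi/\gamma$. I would note in passing that the argument implicitly requires $\gamma \notin 2\pi\mathbb{Z}$ (covered by the intended application $\gamma = 2\pi c$ with $c \in (0,1)$ in Proposition~\ref{prop:main_asy}), since otherwise every $\sinc(\gamma k)$ with $k \neq 0$ vanishes and the limit degenerates to $1$ rather than $\pi/\gamma$.
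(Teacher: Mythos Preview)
Your argument is correct. The reindexing by $k=m_1-m_2$, the split into boundary/flat/linear pieces, the appeal to Lemma~\ref{append:lemma_sinc_sum} for the flat part, and the key identity $k\,\sinc(\gamma k)=\sin(\gamma k)/\gamma$ together with the bounded Dirichlet partial sum all work exactly as you describe, and the three limits combine to $\pi/\gamma$.

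Your route is genuinely different from the paper's. The paper gives only a two-line sketch: it checks the special case $\gamma=\pi$ (where all off-diagonal $\sinc$ terms vanish and the limit is trivially $1=\pi/\pi$) and then waves at ``DTFT's scaling property and Poisson summation'' to handle general $\gamma$. Read charitably, that sketch amounts to recognising the weighted sum $\sum_{k=-M}^{M}(M+1-|k|)\sinc(\gamma k)$ as a Ces\`aro average of the symmetric partial sums of $\sum_{k\in\mathbb{Z}}\sinc(\gamma k)$, whose value $\pi/\gamma$ comes from Poisson summation; Ces\`aro then transports the limit. Your approach unpacks this same Ces\`aro structure explicitly but replaces the abstract averaging lemma by a direct estimate: the ``linear correction'' you isolate is precisely the discrepancy between the ordinary and Ces\`aro partial sums, and your trigonometric identity gives an elementary reason it is $o(1)$. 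The payoff of your version is that everything is written out and, importantly, you surface the hidden hypothesis $\gamma\notin 2\pi\mathbb{Z}$ (needed both for the Dirichlet-sum bound and for the value $\pi/\gamma$ from Lemma~\ref{append:lemma_sinc_sum}), which the paper's statement ``for any positive real number $\gamma>0$'' glosses over. One small caveat: Lemma~\ref{append:lemma_sinc_sum} is stated only for $0<\gamma\le\pi$, so if you want to cover the full range $c\in(0,1)$ (i.e.\ $\gamma\in(0,2\pi)$) you should note that the same Poisson-summation computation extends the sinc-sum formula to $0<\gamma<2\pi$.
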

\begin{proof}
  When $\gamma=\pi$, the left hand of~\eqref{equ:append_a_2} equals $1$. DTFT's scaling property and Poisson summation formula complete the proof.
\end{proof}

\end{appendices}

\section*{Acknowledgment}
We will like to thank Dr. Xing Zhang for his help during the channel data measurement.
We also want to thank Dr. Clayton Shepard for building and supporting the Argos massive MIMO platform~\cite{shepard2012argos}.

\bibliography{libs/IEEEabrv,libs/latencyReTx}

\begin{IEEEbiography}
    [{\includegraphics[width=1in,height=1.25in,clip,keepaspectratio]{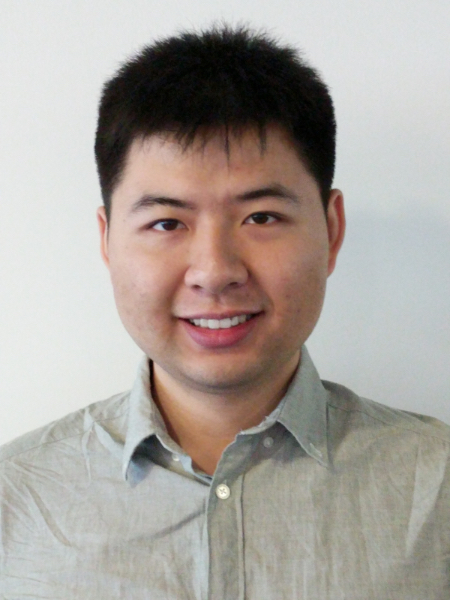}}]{Xu Du}
 received the B.E. degree in information technology from Zhejiang University, China, in 2013. He also received the M.S., Ph.D. degree in electrical and computer engineering from Rice University, Houston, TX, USA, in 2015, and 2019, respectively. He is currently a research scientist with Facebook Inc., Menlo Park, CA, USA. His research interest includes computer system optimization, information theory, and algorithm design.
\end{IEEEbiography}

\begin{IEEEbiography}
    [{\includegraphics[width=1in,height=1.25in,clip,keepaspectratio]{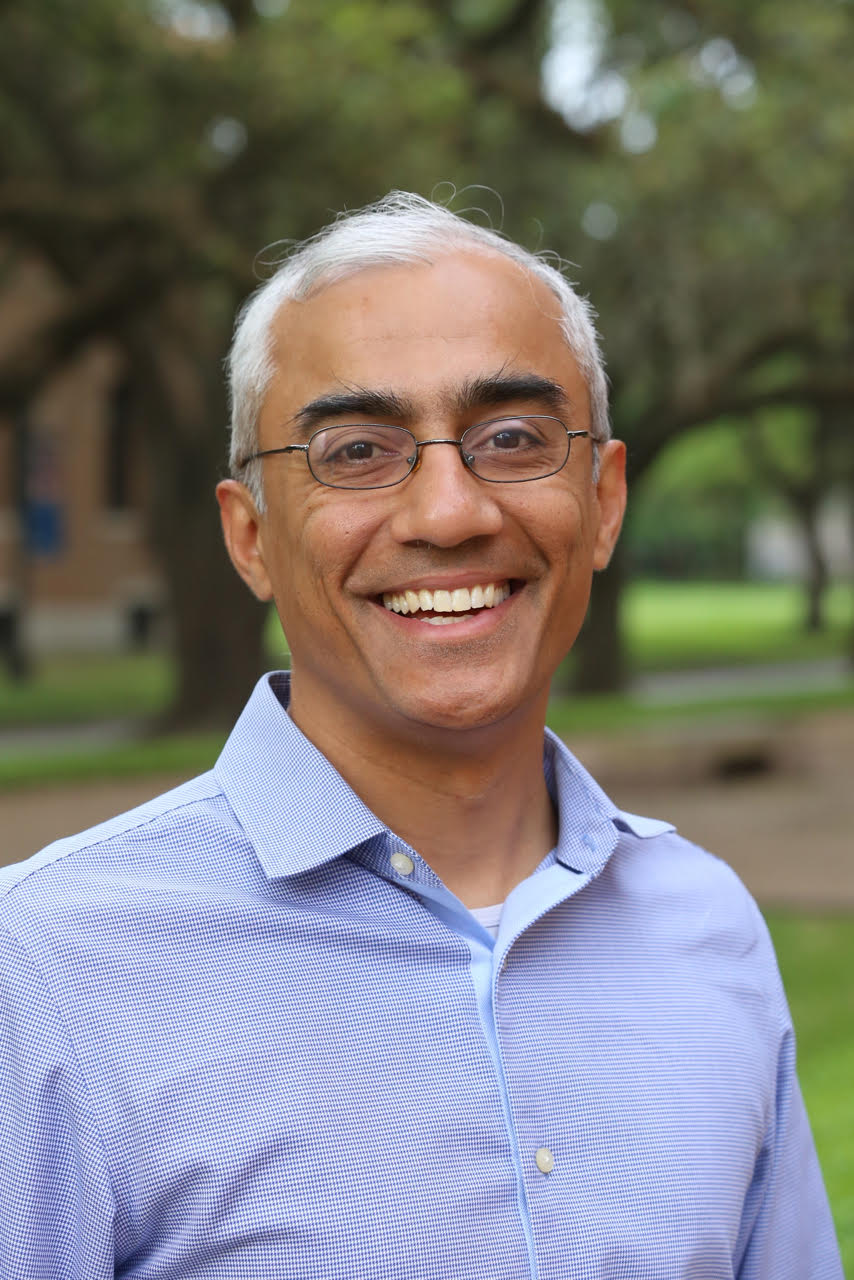}}]{Ashutosh Sabharwal}
(IEEE Fellow 2014) research interests are in wireless theory, protocols and open-source platforms. He is the founder of WARP project (warp.rice.edu), an open-source project which was used at more than 150 research groups worldwide, and have been used by more than 500 research articles. He received 2017 Jack Neubauer Memorial Award, 2018 Advances in Communications Award, 2019 and 2021 ACM Sigmobile Test-of-time Awards, and 2019 Mobicom Best Community Contribution Paper Award. He is a Fellow of the  National Academy of Inventors.
\end{IEEEbiography}

\clearpage

\end{document}